\documentclass[envcountsame,envcountsect,orivec]{llncs}

\usepackage[latin1]{inputenc}
\usepackage{fixltx2e}
\usepackage{fix-cm}
\usepackage{xspace}
\usepackage{amsmath, amssymb}
\usepackage{listings}
\usepackage{graphicx}
\usepackage{url}
\usepackage{ifpdf}
\usepackage{microtype}
\usepackage[USenglish]{babel}

\lstset{
  extendedchars=true,
  language=bash,
  columns=fullflexible,
  showstringspaces=false,
  captionpos=b,
  mathescape=true,
  basicstyle=\normalfont,
  identifierstyle=\normalfont,
  keywordstyle=\bfseries,
  literate={:=}{{$\leftarrow$} }1,
  numbers=left, 
  numbersep=0.5pc, 
  xleftmargin=2.4pc,
  morekeywords={all, to, end},
  escapeinside={(*}{*)},
  breaklines,
}

\ifpdf
  \DeclareGraphicsRule{*}{mps}{*}{}
\else
  \DeclareGraphicsRule{*}{eps}{*}{}
\fi

\pagestyle{plain}

\spnewtheorem{property*}[theorem]{Property}{\bfseries\upshape}{\itshape}

\newcommand{\N}{\ensuremath{\mathbb{N}}\xspace}

\newcommand{\NTIME}[1]{\ensuremath{\mathrm{NTIME}(#1)}\xspace}

\newcommand{\DSPACE}[1]{\ensuremath{\mathrm{DSPACE}(#1)}\xspace}

\newcommand{\DLOGTIME}{\ensuremath{\mathrm{DLOGTIME}}\xspace}
\renewcommand{\P}{\ensuremath{\mathrm{P}}\xspace}

\newcommand{\PSPACE}{\ensuremath{\mathrm{PSPACE}}\xspace}
\newcommand{\PH}{\ensuremath{\mathrm{PH}}\xspace}

\newcommand{\AC}[1]{\ensuremath{\mathrm{AC^{#1}}}\xspace}
\newcommand{\ACC}[1]{\ensuremath{\mathrm{ACC^{#1}}}\xspace}
\newcommand{\TC}[1]{\ensuremath{\mathrm{TC^{#1}}}\xspace}
\newcommand{\NC}[1]{\ensuremath{\mathrm{NC^{#1}}}\xspace}
\newcommand{\sTC}[1]{\ensuremath{\mathrm{sTC^{#1}}}\xspace}

\newcommand{\REG}{\ensuremath{\mathrm{REG}}\xspace}
\newcommand{\CFL}{\ensuremath{\mathrm{CFL}}\xspace}
\newcommand{\DCFL}{\ensuremath{\mathrm{DCFL}}\xspace}
\newcommand{\CSL}{\ensuremath{\mathrm{CSL}}\xspace}
\newcommand{\VPL}{\ensuremath{\mathrm{VPL}}\xspace}

\newcommand{\APER}{\ensuremath{\mathrm{APER}}\xspace}
\newcommand{\NEUTRAL}{\ensuremath{\mathrm{NEUTRAL}}\xspace}

\newcommand{\FO}{\ensuremath{\mathrm{FO}}\xspace}

\newcommand{\MAJ}{\ensuremath{\mathrm{MAJ}}\xspace}

\newcommand{\arb}{\ensuremath{\mathbf{arb}}}
\newcommand{\reg}{\ensuremath{\mathbf{reg}}}
\newcommand{\bit}{\ensuremath{\textsc{Bit}}}
\newcommand{\even}{\ensuremath{\textsc{Even}}}

\newcommand{\bigO}[1]{\ensuremath{O(#1)}}
\newcommand{\co}{\ensuremath{\mathrm{co}}}
\newcommand{\withPolyAdvice}{\ensuremath{/\mathrm{poly}}}

\newcommand{\un}{\ensuremath{\mathrm{un}}}
\newcommand{\neutral}[1]{\ensuremath{\mathord{\mathrm{Neutral}(#1)}}}
\renewcommand{\complement}[1]{\ensuremath{\overline{#1}}}
\newcommand{\integer}[1]{\ensuremath{\widehat{#1}}}
\newcommand{\bc}[1]{\ensuremath{\mathord{\mathrm{BC}(\nobreak#1\nobreak)}}}
\newcommand{\powerset}[1]{\ensuremath{\mathord{\mathfrak{P}\!\left(\nobreak#1\nobreak\right)}}}
\newcommand{\VSTR}{\ensuremath{\mathrm{Struc}}\xspace}
\newcommand{\VSTRn}{\ensuremath{\mathrm{Struc}_n}\xspace}
\newcommand{\kernel}[1]{\ensuremath{\mathrm{kern}(#1)}}

\newcommand{\frakA}{\ensuremath{\mathfrak{A}}}
\newcommand{\calC}{\ensuremath{\mathcal{C}}}
\newcommand{\calL}{\ensuremath{\mathcal{L}}}
\newcommand{\calQ}{\ensuremath{\mathcal{Q}}}
\newcommand{\calU}{\ensuremath{\mathcal{U}}}
\newcommand{\calV}{\ensuremath{\mathcal{V}}}
\newcommand{\letterA}{\ensuremath{\mathrm{a}}}
\newcommand{\letterB}{\ensuremath{\mathrm{b}}}
\newcommand{\letterC}{\ensuremath{\mathrm{c}}}
\newcommand{\letterE}{\ensuremath{\mathrm{e}}}

\hyphenation{Strau-bing grou-po-idal}

\begin{document}

\abovedisplayskip 6pt plus 1pt minus 2pt 
\belowdisplayskip 6pt plus 1pt minus 2pt 

\title{
  Extensional Uniformity for Boolean Circuits\thanks{Supported
  in part by DFG VO 630/6-1, by the NSERC of Canada and by the
  (Qu\'ebec) FQRNT.}
}

\author{%
  Pierre McKenzie\inst{1} \and 
  Michael Thomas\inst{2} \and 
  Heribert Vollmer\inst{2}
}

\institute{%
  D\'{e}p.~d'informatique~et~de~recherche~op{\'e}rationnelle, Universit{\'e}~de~Montr{\'e}al, C.P.~6128, succ.~Centre-Ville, Montr{\'e}al~(Qu{\'e}bec), H3C~3J7~Canada \\ \protect\url|mckenzie@iro.umontreal.ca|%
  \and
Institut f\"ur Theoretische~Informatik, Leibniz Universit\"{a}t~Hannover, Appelstr.~4, 30167~Hannover, Germany \\ \protect\url|{thomas, vollmer}@thi.uni-hannover.de|%
}

\maketitle

\begin{abstract}
  Imposing an extensional uniformity condition on a non-uni\-form
  circuit complexity
  class $\calC$ means simply intersecting $\calC$ with a uniform 
  class $\calL$. By contrast, the usual intensional uniformity conditions
  require that a resource-bounded machine be
  able to exhibit the circuits in the circuit family
  defining $\calC$.
  We say that $(\calC,\calL)$ has the \emph{Uniformity Duality
  Property} if the extensionally uniform class $\calC\cap\calL$
  can be captured intensionally by means of adding so-called
  \emph{$\calL$-numerical
  predicates} to the first-order descriptive complexity apparatus 
  describing the connection language of the circuit family defining $\calC$.
  
  This paper exhibits positive instances and negative instances of the Uniformity Duality Property.
  
  {\bfseries Keywords.} Boolean circuits, uniformity, descriptive complexity
\end{abstract}

\section{Introduction} \label{sect:introduction}

  A family $\{ C_n\}_{n\geq 1}$ of Boolean circuits is
  \emph{uniform} if the way in which $C_{n+1}$ 
  can differ from $C_n$ is restricted.
  Generally, uniformity is imposed by requiring that some form of a
  resource-bounded constructor on input $n$ be able to fully or
  partially describe $C_n$ (see \cite{bor77,ruz81,all89,baimst90,la89} 
  or refer to \cite{vol99} for an overview).
  Circuit-based language classes can then be compared with
  classes that are based on a finite computing mechanism such as a Turing
  machine.
  
  Recall the gist of descriptive complexity. Consider the set of 
  words $w\in\{a,b\}^\star$ having no $b$ at an even position.
  This language is described by the $\FO[\mathord{<},\even]$ formula $\neg
  \exists i \big(\even(i) \wedge P_b(i)\big)$. In such a first-order formula,
  the variables range over positions in $w$, a predicate $P_\sigma$ for
  $\sigma\in\{a,b\}$ holds at $i$ iff $w_i=\sigma$, and
  a \emph{numerical} predicate, such as the obvious $1$-ary $\even$
  predicate here, holds at its arguments iff these arguments fulfill
  the specific relation.
  
  The following viewpoint has emerged \cite{baimst90,baim94,bela06} over two decades:
  \emph{when a 
    circuit-based language class is characterized using first-order descriptive
    complexity, the circuit uniformity conditions spring up in the logic
    in the form of restrictions on the set of numerical predicates allowed}.
  
  As a well studied example \cite{imm89,baimst90},
  $\FO[\mathord{<},+,\times] = \DLOGTIME$-uniform $\AC{0} \subsetneq \text{non-uniform } \AC{0} = \FO[\arb]$, 
  where the latter class is the class of languages definable by first-order formulae
  entitled to \emph{arb}itrary numerical predicates (we use a logic and the set of
    languages it captures interchangeably when this brings no
    confusion).
  
  In a related vein but with a different emphasis,
  Straubing \cite{str94} presents a beautiful account of the
  relationship between automata theory, formal logic and (non-uniform)
  circuit complexity. 
  Straubing concludes by expressing the proven fact that $\AC{0}
  \subsetneq  \ACC{0}$ and the celebrated conjectures that $\AC{0}[q]
  \subsetneq \ACC{0}$ and that $\ACC{0} \subsetneq \NC{1}$ as instances of
  the following conjecture concerning the class $\REG$ of regular
  languages:
  \begin{align} 
    {\cal Q}[\arb] \cap \REG &= {\cal Q}[\reg]. \label{conj:reg}
  \end{align}
  In Straubing's instances, $\cal Q$ is an appropriate set of quantifiers
  chosen from $\{\exists\}\cup\{\exists^{(q,r)} : 0 \leq r < q\}$
  and $\reg$ is the set of \emph{reg}ular numerical predicates, that is, 
  the set of those numerical predicates   of arbitrary arity definable in
  a formal sense by finite automata.
  We stress the point of view that intersecting
  $\{\exists\}[\arb] = \FO[\arb]$ with $\REG$ to form $\FO[\arb] \cap
  \REG$ in conjecture~\eqref{conj:reg} amounts to 
  imposing uniformity on the non-uniform class FO$[\arb]$.
  And once again, imposing uniformity has the effect of restricting the
  numerical predicates:   it is a proven fact that 
  $\FO[\arb] \cap \REG = \FO[\reg]$, and conjecture~\eqref{conj:reg} 
  expresses the hope that this phenomenon extends from $\{\exists\}$ 
  to other $\calQ$, which would determine much of the internal structure of 
  $\NC{1}$.
  We ask:
  \begin{enumerate}
    \item Does the duality between uniformity in a circuit-based class
    and numerical predicates in its logical characterization extend
    beyond $\NC{1}$?
    
    \item What would play the role of the regular numerical predicates in
    such a duality?
    
    \item Could such a duality help understanding classes such as
    the context-free languages in $\AC{0}$?
    
  \end{enumerate}

  To tackle the first question, we note that intersecting with
  $\REG$ is just one out of many possible ways in which one
  can ``impose uniformity''. Indeed, if
  $\calL$ is any uniform language class, one can replace $\calQ[\arb]
  \cap \REG$ by $\calQ[\arb] \cap \calL$ to get another uniform subclass
  of ${\cal Q}[\arb]$. For example, consider any ``formal language
  class'' (in the loose terminology used by Lange  when discussing
  language theory versus complexity theory \cite{la89}), such as the
  class $\CFL$ of context-free languages. Undoubtedly, $\CFL$ is a
  uniform class of languages. Therefore, the class $\calQ[\arb] \cap
  \CFL$ is another uniform class well worth comparing with $\calQ[\mathord{<},+]$
  or $\calQ[\mathord{<},+,\times]$. Of course, 
  $\FO[\arb] \cap \CFL$ is none other than the poorly understood class
  $\AC{0} \cap \CFL$, and when $Q$ is a quantifier given by some word problem of a nonsolvable group, $(\FO\mathord{+}\{Q\})[\arb] \cap \CFL$ is the poorly understood
  class $\NC{1}\cap\CFL$ alluded to 20 years ago \cite{ibjira88}.
  
  The present paper thus considers classes $\calQ[\arb] \cap \calL$ for
  various $\calQ$ and $\calL$. 
  To explain its title, we note that the constructor-based approach
  defines uniform classes by specifying their properties: such
  definitions are \emph{intensional} definitions.
  By contrast, viewing ${\cal Q}[\arb]\cap \REG$ as a uniform class
  amounts to an \emph{extensional} definition, namely one that selects the
  members of ${\cal Q}[\arb]$ that will collectively form
  the uniform class.
  In this paper we set up the extensional uniformity framework and we
  study classes $\calQ[\arb] \cap \calL$ for
  $\calQ \supseteq \{\exists\}$.

  Certainly, the uniform class $\calL$ will determine the class of
    numerical predicates we have to use when trying to capture
    $\calQ[\arb] \cap \calL$, as Straubing does for $\calL=\REG$, as
    an intensionally
    uniform class. A contribution of this paper is to provide a
    meaningful definition for the set $\calL^\N$ of 
    \emph{$\calL$-numerical predicates}. Informally, $\calL^\N$ is
    the set of relations over
    the natural numbers that are definable in the sense of Straubing \cite[Section III.2]{str94} 
    by a language over a singleton alphabet
    drawn from $\calL$. When $\calL$ is $\REG$, the $\calL$-numerical
    predicates are precisely Straubing's regular numerical predicates.

  Fix a set $\calQ$ 
  of monoidal or groupoidal quantifiers in the sense of \cite{baimst90,vol99,lamcscvo01}.
    (As prototypical examples, the reader unfamiliar with such
    quantifiers may think of the
    usual existential and universal quantifiers, of
    Straubing's ``there exist $r$ modulo $q$'' quantifiers, or of
    threshold quantifiers such as ``there exist a majority'' or ``there
    exist at least $t$''). We
    propose the \emph{Uniformity Duality
    Property for $(\calQ,\calL)$} as a natural generalization of
    conjecture~\eqref{conj:reg}:
  
    \medskip\noindent{\bfseries Uniformity Duality Property for ($\calQ,\calL$)}:
      $$\calQ[\arb]\cap \calL = \calQ[\mathord{<}, \calL^\N]\cap \calL.$$

  Barrington, Immerman and Straubing \cite{baimst90} have shown that
  $\calQ[\arb]$ equals $\AC{0}[\calQ]$, that is,
  non-uniform $\AC{0}$ with $\calQ$ gates.
  Behle and Lange \cite{bela06} have shown that $\calQ[\mathord{<},
  \calL^\N]$ equals $\FO[\mathord{<},\calL^\N]\text{-uniform}\ \AC{0}[\calQ]$,
  that is, uniform $\AC{0}[\calQ]$ where the direct
  connection language of the circuit families can be described by
  means of the logic $\FO[\mathord{<},\calL^\N]$. Hence the Uniformity Duality Property
  can be restated in circuit complexity-theoretic terms as follows: 

    \medskip\noindent{\bfseries Uniformity Duality Property for ($\calQ,\calL$), 2nd form}:
      $$\AC{0}[\calQ] \cap \calL \ \ =\ \ 
      \FO[\mathord{<},\calL^\N]\text{-uniform}\ \AC{0}[\calQ]
      \ \cap\ \calL.$$
  
      By definition, $\calQ[\arb]\cap \calL \supseteq \calQ[\mathord{<}, \calL^\N]\cap \calL$. 
    The critical question is whether the reverse inclusion holds.
    Intuitively, the Uniformity Duality Property states that the ``extensional
    uniformity induced by intersecting $\calQ[\arb]$ with $\calL$'' is a
    strong enough restriction imposed on $\calQ[\arb]$ to permit
    expressing the uniform class using the
    $\calL$-numerical predicates, or in other words: the
    extensional uniformity given by intersecting the
    non-uniform class with $\calL$ coincides with the
    intensional uniformity condition given by first-order
    logic with $\calL$-numerical predicates. 
    Further motivation for this definition of $\calQ[\mathord{<}, \calL^\N]\cap
    \calL$ is as follows:
    \begin{itemize}
      \item when constructors serve to define uniform classes, they have
      access to input lengths but not to the inputs themselves; a
      convenient logical analog to this is to use the unary alphabet
      languages from $\calL$ as a basis for defining the extra numerical
      predicates
      
      \item if the closure properties of $\calL$ differ from the 
      closure properties of $\calQ[\arb]$, then
      $\calQ[\arb]\cap \calL = \calQ[\mathord{<},
      \calL^\N]$ may fail trivially 
      (this occurs for example when $\calL=\CFL$ and 
      $\calQ=\{\exists\}$ since the non-context-free language 
      $\{\letterA^n \letterB^n \letterC^n: n\geq 0\}$
      is easily seen to belong to $\calQ[<,\calL^\N]$ by closure under 
      intersection of the latter); hence intersecting 
      $\calQ[\mathord{<}, \calL^\N]$ with $\calL$
      before comparing it with $\calQ[\arb]\cap \calL$
      is necessary to obtain a reasonable
      generalization of Straubing's conjecture for
      classes $\calL$ that are not Boolean-closed.
      
    \end{itemize}
  
  We now state our results, classified, loosely, as
  foundational observations~(F) or technical statements~(T).
  We let $\calL$ be \emph{any} class of languages.

  \begin{list}{*}{\itemsep \lineskip \settowidth{\leftmargin}{\indent} \settowidth{\labelwidth}{(T)} \settowidth{\labelsep}{\,}}

    \item[(F)] By design, the Uniformity Duality Property for $(\calQ,\REG)$
    is precisely Straubing's conjecture~\eqref{conj:reg}, hence its conjectured
    validity holds the key to the internal structure of $\NC{1}$. 
  
    \item[(F)] The Uniformity Duality Property for
    $(\{\exists\},\NEUTRAL)$
    is precisely the Crane Beach Conjecture \cite{baimlascth05}; here, $\NEUTRAL$ is the class of languages $L$ that have a neutral letter, i.e., a letter $e$ that may be arbitrarily inserted into or deleted from words without changing membership in $L$. The Crane
    Beach conjecture, stating that any
    neutral letter language in $\AC{0}=\FO[\arb]$ can be expressed in
    $\FO[\mathord{<}]$, was motivated by attempts to develop a purely
    automata-theoretic proof that \emph{Parity}, a neutral letter language, is
    not in $\AC{0}$. The Crane Beach Conjecture was ultimately refuted \cite{baimlascth05}, 
    but several of its variants have been studied. Thus \cite{baimlascth05}:
    \begin{itemize}
      \item the Uniformity Duality Property for $(\{\exists\}, \NEUTRAL)$ fails
      \item the Uniformity Duality Property for $(\{\exists\}, \NEUTRAL \cap \REG)$ holds
      \item the Uniformity Duality Property for $(\{\exists\}, \NEUTRAL \cap \{$two-letter lan\-gua\-ges$\})$ holds.
    \end{itemize}

    \item[(T)] 
    Our definition for the set $\calL^\N$ of
    $\calL$-numerical predicates parallels Straubing's definition of
    regular numerical predicates. 
    For kernel-closed language classes $\calL$ that are closed under homomorphisms, 
    inverse homomorphisms and intersection with a regular language, 
    we furthermore characterize $\calL^\N$ as the set of predicates expressible
    as one generalized unary $\calL$-quantifier applied to an FO$[\mathord{<}]$-formula.
    (Intuitively, $\calL$-numerical predicates are those
    predicates definable in first-order logic with one ``oracle call'' to a
    language from $\calL$.)   
    
    \item[(T)] We characterize the numerical predicates that surround the
    context-free languages:
      first-order combinations of $\CFL^\N$ suffice to capture all 
      semilinear predicates over $\N$; in particular, 
      $\FO[\mathord{<,+}] = \FO[\DCFL^\N] = \FO[\bc{\CFL}^\N]$,
      where $\DCFL$ denotes the deterministic context-free languages and 
      $\bc{\CFL}$ is the Boolean closure of $\CFL$.
    
    \item[(T)] We deduce that, despite the fact that $\FO[\bc{\CFL}^\N]$ contains all
    the semilinear relations, the Uniformity Duality Property fails for
    $(\{\exists\},\calL)$ in each of the following cases:
    \begin{itemize}
      \item $\calL =  \CFL$
      \item $\calL = \VPL$, the ``visibly pushdown languages'' recently introduced by \cite{alma04}
      \item $\calL =$ Boolean closure of the deterministic context-free languages
      \item $\calL =$ Boolean closure of the linear context-free languages
      \item $\calL =$ Boolean closure of the context-free languages.
    \end{itemize}
    The crux of the justifications of these negative results is a
    proof that the complement of the ``Immerman language'', used in
    disproving the Crane Beach Conjecture, is context-free.
  
    \item[(T)] At the opposite end of the spectrum, while it is clear that the Uniformity Duality
    Property holds for the set of all languages and any $\calQ$, we show that the Uniformity Duality
    Property already holds for $(\calQ,\calL)$ whenever $\calQ$ is a set of groupoidal quantifiers and $\calL= \NTIME{n}^\calL$; 
    thus it holds for, e.\,g., the rudimentary languages, $\DSPACE{n}$, $\CSL$ and $\PSPACE$.

  \end{list}
  
  The rest of this paper is organized as
  follows. Section~\ref{sect:preliminaries} contains preliminaries.
  Section~\ref{sect:duality_property} defines the $\calL$-numerical
  predicates and introduces the Uniformity Duality Property
  formally. The context-free numerical predicates are investigated in
  Section~\ref{sect:cfl_num_pred}, and the duality
  property for classes of context-free languages is considered in
  Section~\ref{sect:cfl_property}. Section~\ref{sect:csl_property}
  shows that the duality property holds when $\calL$ is ``large
  enough''. Section~\ref{sect:conclusion} concludes with a summary and a 
  discussion. 

\section{Preliminaries} \label{sect:preliminaries}

  \subsection{Complexity Theory}
  
  We assume familiarity with standard notions in formal languages,
  automata and complexity theory.
  
  When dealing with circuit complexity classes, all references will be
  made to the non-uniform versions unless otherwise stated.  Thus
  $\AC{0}$ refers of the Boolean functions
  computed by constant-depth polynomial-size unbounded-fan\-in
  $\{\vee,\wedge,\neg\}$-circuits.
  And $\DLOGTIME$-uniform $\AC{0}$ refers to the set of those
  functions in $\AC{0}$ computable by a circuit family having a 
  direct connection language decidable in time $\bigO{\log n}$ 
  on a deterministic Turing machine (cf. \cite{baimst90,vol99}). 
  
  \subsection{First-Order Logic}

  Let $\N$ be the natural numbers $\{1,2,3, \ldots\}$ and let $\N_0 = \N \cup \{0\}$.
  A \emph{signature} $\sigma$ is a finite set of relation symbols with fixed arity and constant symbols. 
  A $\sigma$-structure $\frakA = \langle \calU^\frakA, \sigma^\frakA \rangle$ consists of a set $\calU^\frakA$, called \emph{universe}, and a set $\sigma^\frakA$ that contains an \emph{interpretation} $R^\frakA \subseteq (\calU^\frakA)^k$ for each $k$-ary relation symbol $R \in \sigma$. We fix the interpretations of the ``standard'' numerical predicates $<$, $+$, $\times$, etc. to their natural interpretations. By $\bit$ we will denote the binary relation $\{ (x,i) \in \N^2 : \text{bit $i$ in the binary representation of $x$ is $1$}\}$.
  For logics over strings with alphabet $\Sigma$, we will use signatures extending $\sigma_\Sigma = \{ P_a : a \in \Sigma\}$ and identify $w = w_1 \cdots w_n \in \Sigma^\star$ with $\frakA_w = \langle \{1,\ldots, n\}, \sigma^{\frakA_w} \} \rangle$, where $P_a^{\frakA_w}= \{ i \in \N : w_i = a \}$ for all $a \in \Sigma$. We will not distinguish between a relation symbol and its interpretation, when the meaning is clear from the context.
  
  Let $\calQ$ be a set of (first-order) quantifiers.
  We denote by $\calQ[\sigma]$ the set of first-order formulae over $\sigma$ 
  using quantifiers from $\calQ$ only.  
  The set of all $\calQ[\sigma]$-formulae will be referred to as the 
  \emph{logic} $\calQ[\sigma]$. In case $\calQ = \{\exists\}$ 
  ($\calQ = \{\exists\}\cup\calQ'\}$), we will also write $\FO[\sigma]$ 
  ($\FO\mathord{+}\calQ'[\sigma]$, respectively).  When discussing 
  logics over strings, we will omit the relation symbols from $\sigma_\Sigma$.
  
  Say that a language $L \subseteq \Sigma^\star$ is \emph{definable}
  in a logic $\cal Q[\sigma]$ if there exists a
  $\calQ[\sigma]$-formula $\varphi$ such that $\frakA_w \models
  \varphi \Longleftrightarrow w \in L$ for all $w \in \Sigma^\star$,
  and say that a relation $R\subseteq\N^n$ is definable by a
  $\calQ[\sigma]$-formula if there exists a formula $\varphi$ with free
  variables $x_1,\ldots,x_n$ that defines $R$ for all sufficiently large
  initial segment of $\N$, i.\,e., if $\langle \{1,\ldots, m\}, 
  \sigma \rangle \models \varphi(c_1,\ldots,c_n) \iff (c_1,\ldots,c_n) \in R$ for all $m \geq c_\mathrm{max}$, 
  where $c_\mathrm{max}=\max\{c_1,\ldots,c_n\}$~\cite[Section~3.1]{schweikardt05}.
  By abuse of notation, we
  will write $L \in \calQ[\sigma]$ (or $R \in \calQ[\sigma]$) to express
  that a language $L$ (a relation $R$, resp.) is definable by a
  $\calQ[\sigma]$-formula and use a logic and the set of languages and
  relations it defines interchangeably.

\section{The Uniformity Duality Property} \label{sect:duality_property}

  In order to generalize conjecture \eqref{conj:reg}, we propose
  Definition~\ref{numerical} as a simple
  generalization of the regular numerical predicates defined
  using \calV-structures by Straubing \cite[Section III.2]{str94}.
  \begin{definition}
  Let $\calV_n=\{x_1,\ldots,x_n\}$ be a nonempty set of variables and
  let $\Sigma$ be a finite alphabet. A \emph{$\calV_n$-structure} is a
  sequence  
    $$
      \underline{w} = (a_1,V_1) \cdots (a_m,V_m) \in \left(\Sigma \times \powerset{\calV_n}\right)^\star
    $$
    such that $a_1, \ldots, a_m \in \Sigma$ and the nonempty sets
    among $V_1, \ldots, V_m$ form a partition of $\calV_n$
    (the underscore distinguishes $\calV_n$-structures from ordinary strings).
    Define $\Gamma_n= \{0\}\times \powerset{\calV_n}$. 
    We say that a
    $\calV_n$-structure $\underline{w}$ is \emph{unary} if 
    $\underline{w}\in \Gamma_n^\star$, i.\,e., if $a_1 \cdots a_n$ is defined 
    over the singleton alphabet $\{0\}$; in that case, we define the
    \emph{kernel} of $\underline{w}$, $\kernel{\underline{w}}$, as
    the maximal prefix of $\underline{w}$ that does not end with
    $(0,\emptyset)$; to signify that $x_i \in
    V_{c_i}$ for all $1\leq i \leq n$, we also write
    $\kernel{\underline{w}}$ as $[x_1=c_1,\ldots,x_n=c_n]$ and
    we let $\underline{w}^\N$ stand for $(c_1,\ldots, c_n)$.

    We define $\VSTRn$ as the language of all such words in $\Gamma_n^\star$ that are unary $\calV_n$-structures and let $\VSTR = \bigcup_{n>0} \VSTRn$.
  \end{definition}
  
  Any set $L$ of unary $\calV_n$-structures naturally prescribes a
  relation over the natural numbers. Hence, a set of such $L$
  prescribes a set of relations, or numerical predicates, over $\N$.

  \begin{definition} \label{numerical}
    Let $L\subseteq \Gamma_n^\star$ be a \emph{unary $\calV_n$-language}, that is, a set of
    unary $\calV_n$-structures. Let
    $L^\N = \{\underline{w}^\N : \underline{w} \in L \}$ denote
    the relation over $\N^n$ defined by $L$. 
    Then the \emph{$\calL$-numerical predicates} are defined as 
    $$
      \calL^\N = \{L^\N : L\in \calL \text{ and } L \subseteq \VSTR\}.
    $$
    We say that a language $L$ is \emph{kernel-closed} if, for every $\underline{w}\in L$, $\kernel{\underline{w}} \in L$. We further say that a language class $\calL$ is \emph{kernel-closed} if, for every $L \in \calL$ there exists an $L' \in \calL$ such that $L^\N=L'^\N$ and $L'$ is kernel-closed.
  \end{definition}
  
  \begin{remark}
    A unary $\calV_n$-language $L$ defines a unique numerical
    relation $L^\N$. Conversely, if two unary $\calV_n$-structures
    $\underline{v}$ and $\underline{w}$ define the same tuple
    $\underline{v}^\N=\underline{w}^\N$, then one of the two is
    obtained from the other by padding on the right with the letter
    $(0,\emptyset)$, i.e.,
    $\underline{v}\in\underline{w}(0,\emptyset)^\star$ or
    $\underline{w}\in\underline{v}(0,\emptyset)^\star$. Hence a numerical
    relation $R$ uniquely determines $\kernel{L}=
    \{\kernel{\underline{w}} : \underline{w}\in L\}$ for any language $L$ such that $R=L^\N$.
  \end{remark}

  We point out the following facts, where we write $\equiv_q r$ for the unary predicate $\{ x: x \equiv r \mod q \}$.
  
  \begin{proposition} \label{prop:regular-instances}
    Let $\APER$ and $\NEUTRAL$ denote the set of aperiodic languages and the set of languages having a neutral letter respectively. Then
    \begin{enumerate}
      \item $\APER^\N = \FO[\mathord{<}]$, \label{prop:regular-instances-aper}
      \item $\REG^\N = (\AC{0} \cap \REG)^\N = \FO[\mathord{<},\{ \equiv_q r : 0 \leq r < q \}] = \reg$, and \label{prop:regular-instances-reg}
      \item $\NEUTRAL^\N \subseteq \FO[\mathord{<}]$. \label{prop:regular-instances-neutral}
    \end{enumerate}
  \end{proposition}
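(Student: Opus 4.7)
The plan is to build a dictionary between unary $\calV_n$-languages $L\subseteq\VSTRn$ and numerical predicates $L^\N\subseteq\N^n$: each unary $\calV_n$-structure faithfully encodes the tuple $(c_1,\ldots,c_n)$ at which the variables $x_1,\dots,x_n$ occur, with trailing $(0,\emptyset)$-padding transparent to $L^\N$. One direction of the translation replaces each atomic string-predicate $P_{(0,V)}(p)$ in a formula for $L$ by the conjunction $\bigwedge_{i\in V}x_i=p\wedge\bigwedge_{i\notin V}x_i\neq p$ and lets bound string-position variables range over $\N$ (extending the word with blank padding beyond any witness). The converse direction encodes a formula over $\N$ into one over $\Gamma_n^\star$ by first existentially locating the positions of the variables. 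Both translations preserve the quantifier set and the order predicate.

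For part~\ref{prop:regular-instances-aper}, I will feed this dictionary through the Sch\"utzenberger--McNaughton--Papert theorem, which identifies the aperiodic languages over a finite alphabet with the FO$[\mathord{<}]$-definable ones. Part~\ref{prop:regular-instances-reg} proceeds analogously: the equality $\REG^\N=\reg=\FO[\mathord{<},\{\equiv_q r:0\leq r<q\}]$ is essentially Straubing's definition combined with the standard order-and-modular characterization of $\reg$ \cite{str94}. The remaining inclusion $(\AC{0}\cap\REG)^\N\supseteq\REG^\N$ reduces to showing that every regular unary $\calV_n$-language $L$ already lies in non-uniform $\AC{0}$; I will exploit the sparsity of such structures (at most $n$ non-blank letters) to express $L$ in $\FO[\arb]=\AC{0}$ by a sentence that existentially locates the positions $y_1,\dots,y_n$ bearing the correct variable labels and then applies the arbitrary numerical predicate $L^\N(y_1,\dots,y_n)$ available in $\arb$.

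For part~\ref{prop:regular-instances-neutral}, the first step is to observe that a non-empty $L\subseteq\VSTRn$ admitting a neutral letter $e$ must satisfy $e=(0,\emptyset)$: any other candidate $(0,V)$ with $V\neq\emptyset$, once inserted into a member of $L$, would violate the partition property of $\calV_n$-structures and thereby fall out of $\VSTRn\supseteq L$, contradicting neutrality. The language $L$ is then invariant under insertion and deletion of blank positions, so $L^\N$ depends only on the equality/order type of $(c_1,\dots,c_n)$. Since there are only finitely many such types on $n$ elements, $L^\N$ becomes a finite disjunction of atomic FO$[\mathord{<}]$-constraints of the form $x_i<x_j$ and $x_i=x_j$, hence lies in $\FO[\mathord{<}]$.

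The principal obstacle will be the $(\AC{0}\cap\REG)^\N\supseteq\REG^\N$ direction, since generic regular languages are not in non-uniform $\AC{0}$; the resolution must lean on the very restricted format of unary $\calV_n$-structures in order to offload all ``counting'' into a single $\arb$-oracle call to $L^\N$. The other translation steps are mechanical once padding and kernel-closure are handled with care.
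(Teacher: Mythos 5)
Your proposal is correct. For parts 1 and 2 it follows essentially the same route as the paper: the same atom-for-atom dictionary between $P_{(0,V)}(p)$ and $\bigwedge_{i\in V}x_i=p\wedge\bigwedge_{i\notin V}x_i\neq p$, fed through McNaughton--Papert and Straubing's theorems. The one step you leave as ``handled with care'' is in fact the only non-mechanical point of part 1, namely that passing from $L$ to $\kernel{L}\cdot(0,\emptyset)^\star$ preserves aperiodicity; the paper settles this by observing that any monoid recognizing $L$ also recognizes the quotient $L[(0,\emptyset)^\star]^{-1}\supseteq\kernel{L}$, so you should cite closure of $\APER$ under quotients explicitly. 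Your extra detail for $(\AC{0}\cap\REG)^\N\supseteq\REG^\N$ --- that any unary $\calV_n$-language lies in $\FO[\arb]$ because one can existentially locate the at most $n$ non-blank positions and invoke an $\arb$ oracle on them (together with the word length) --- is a point the paper leaves implicit behind ``follows analogously,'' and it is sound. For part 3 you genuinely diverge: after the shared observation that the neutral letter must be $(0,\emptyset)$, the paper writes $L$ as a finite union of languages $(0,\emptyset)^\star(0,V_1)(0,\emptyset)^\star\cdots(0,V_k)(0,\emptyset)^\star$, concludes that $L$ is regular aperiodic, and re-applies part 1, whereas you note directly that neutrality makes $L^\N$ a union of order/equality types of $(c_1,\ldots,c_n)$ and hence a finite disjunction of atomic $\FO[\mathord{<}]$ constraints. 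Your version is more elementary, bypassing McNaughton--Papert for this part, at the cost of not recording the intermediate fact that such $L$ are themselves regular and aperiodic.
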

 
  \begin{proof}
    For part~\ref{prop:regular-instances-aper}, let $L \in \APER$ with
    $L\subseteq \VSTRn$. Define $L' = \kernel{L}\cdot(0,\emptyset)^\star$. We claim that $L'$ is
    aperiodic. To see this, note that for any language $K$, any monoid
    recognizing $L$ also recognizes $LK^{-1}= \{v\in\Gamma_n^\star : vu\in L \mbox{ for some } u\in K\}$ \cite[Proposition 2.5]{pin84}.
    Hence $L[(0,\emptyset)^\star]^{-1}$ is
    aperiodic. But $\kernel{L} \subseteq L[(0,\emptyset)^\star]^{-1}$.
    So $L'$ equals $L[(0,\emptyset)^\star]^{-1}\cdot (0,\emptyset)^\star$ and is indeed aperiodic.
  
    Hence there exists a formula $\varphi \in \FO[<]$ such that $L(\varphi) = L'$ \cite{mcpa71}. Let
    $\psi(x_1,\ldots,x_n)$ be obtained from $\varphi$ by replacing each
    $P_{(0,V)}(x)$, $V \subseteq \calV_n$, with $\bigwedge_{x_i \in V} x=x_i \land \bigwedge_{x_i \notin V} x \neq x_i$.
    Then for all $\underline w = [x_1=c_1,\ldots,x_n=c_n](0,\emptyset)^j$ and all $m \geq \max\{c_i : 1 \leq i \leq n\}$,
    \[\frakA_{\underline{w}} \models \varphi \iff \langle \{1,\ldots,m\}, <^\frakA \rangle \models \psi(c_1,\ldots,c_n).\]
    Let $\vec x$ and $\vec c$ abbreviate $x_1,\ldots, x_n$ and $c_1,\ldots,c_n$, respectively, and let $c_\mathrm{max}=\max\{c_i : 1 \leq i \leq n\}$. Then 
    \begin{align*}
      \vec c \in L^\N 
      \implies & \exists i\colon [x_1=c_1,\ldots,x_n=c_n](0,\emptyset)^i \in L \\
      \implies &  [x_1=c_1,\ldots,x_n=c_n] \in \kernel{L} \\
      \implies & \forall i\colon [x_1=c_1,\ldots,x_n=c_n](0,\emptyset)^i \in L' \\
      \implies & \forall i\colon \frakA_{[x_1=c_1,\ldots,x_n=c_n](0,\emptyset)^i} \models \varphi \\
      \implies & \forall m \geq c_\mathrm{max} \colon \langle \{1,\ldots,m\} , <, = , \vec c \rangle \models \psi(\vec x)
    \intertext{ and }
      \vec c \notin L^\N 
      \implies & \forall i\colon [x_1=c_1,\ldots,x_n=c_n](0,\emptyset)^i \notin L \\
      \implies & [x_1=c_1,\ldots,x_n=c_n] \notin \kernel{L} \\
      \implies & \forall i\colon [x_1=c_1,\ldots,x_n=c_n](0,\emptyset)^i \notin L' \\
      \implies & \forall i\colon \frakA_{[x_1=c_1,\ldots,x_n=c_n](0,\emptyset)^i} \not\models \varphi \\
      \implies & \forall m \geq c_\mathrm{max}\colon \langle \{1,\ldots,m\} , <, = , \vec c \rangle \not\models \psi(\vec x)
    \end{align*}
    
    Hence $L^\N = \{ \vec c \in \N^n : (\forall m \geq c_\mathrm{max}) [ \langle \{1,\ldots,m\} , <, = , \vec c \rangle \models \psi(\vec x)] \}$. But $\psi(\vec x) \in \FO[\mathord{<}]$ and therefore $L^\N \in \FO[\mathord{<}]$.  
  
    For the other inclusion, let $R \in \N^n \cap \FO[<]$ via formula $\psi(\vec x)$. Define 
    $\varphi \equiv \exists x_1 \cdots \exists x_n \big(\psi(\vec x) \land \chi(\vec x)\big)$, where 
    \[
      \chi(\vec x) \equiv 
      \bigwedge_{1\leq i \leq n} \,
        \bigvee_{\substack{V \in \powerset{\calV_n}, \\ x_i \in V}} 
        \bigg(
          P_{(0,V)} (x_i)
          \land 
          \forall z 
          \Big(
            \bigvee_{\substack{V' \in \powerset{\calV_n}, \\ x_i \in V'}} 
            \big(
              P_{(0,V')} (z) \leftrightarrow z=x_i
            \big) 
          \Big)
        \bigg).
    \]
    The purpose of $\chi(\vec x)$ is to bind the variables in $\vec x$ to their respective values in a corresponding $\calV_n$-structure: 
    for a string $\underline{w}=(0,V_1)\cdots(0,V_m) \in \Gamma_n^\star$ and a tuple $\vec c$, 
    $\frakA_{\underline{w}} \models \chi(\vec c)$ holds iff 
      $m \geq \max\{c_i : 1 \leq i\leq n\}$, $\underline{w}$ is a unary $\calV_n$-structure, and
      for all $1\leq i \leq n$, $x_i \in V_{c_i}$.
    Alike the above equivalence, we obtain 
    \begin{align*}
      \vec c \in R 
      \iff & \forall m \geq c_\mathrm{max}\colon \langle \{1,\ldots,m\}, <, = , \vec c \rangle \models \psi(\vec x) \\ 
      \iff & \forall i \colon \frakA_{[x_1=c_1,\ldots, x_n=c_n](0,\emptyset)^i} \models \varphi.
    \end{align*}
    Let $L$ be the unary $\calV_n$-language $\{\underline{w} \in \VSTRn : \frakA_{\underline{w}} \models \varphi \}$. 
    Then $R=L^\N=\big(L(\varphi)\big)^\N$ with $\varphi \in \FO[<]$. Thus $M_L$ is finite and aperiodic and $L^\N \in \APER^\N$.
    
    Part~\ref{prop:regular-instances-reg} follows analogously from \cite[Theorems~III.1.1 and~III.2.1]{str94}.
  
    For Part~\ref{prop:regular-instances-neutral}, let $R \in
    \NEUTRAL^\N$. Then $R=L^\N$ for some neutral letter language
    $L\subseteq \VSTRn$. Assume that
    the neutral letter of $L$ is $(0,\emptyset)$, otherwise
    $L\subseteq \VSTRn$ implies that $L$ is empty. Since we can insert
    or delete $(0,\emptyset)$ in any word at will, $L$ is fully
    determined by the $(0,\emptyset)$-free words it contains, that is,
    \begin{displaymath}
    L = 
      \hskip-0.5em 
      \bigcup_{\substack{(V_1,V_2, \ldots,V_{k}) \text{ partitions } \calV_n \\ \text{ and } (0,V_1)(0,V_2) \cdots (0,V_{k})\in L}}
      \hskip-0.5em 
    (0,\emptyset)^\star (0,V_1)(0,\emptyset)^\star (0,V_2) \cdots (0,\emptyset)^\star(0,V_{k})(0,\emptyset)^\star.
    \end{displaymath}
    This is a finite union of regular aperiodic languages.
    Hence $L$ is regular aperiodic, and $L^\N$ is in $\FO[<]$ by
    Part~\ref{prop:regular-instances-aper}.
  \end{proof}
  
  Having discussed the $\calL$-numerical predicates, we can
  state the property expressing the dual facets of uniformity, namely,
  intersecting with an a priori uniform class on the one hand, and
  adding the corresponding numerical predicates to first-order logics
  on the other.
  
  \begin{property*}[Uniformity Duality {for ($\calQ$,$\calL$)}] \label{property:duality}
    Let $\calQ$ be a set of quantifiers and let $\calL$ be a language class, then
    $$
      \calQ[\arb] \cap \calL = \calQ[\mathord{<}, \calL^\N] \cap \calL. 
    $$ 
  \end{property*}

  As $\calQ[\arb]=\AC{0}[\calQ]$ \cite{baimst90} and $\calQ[\mathord{<},
  \calL^\N]=\FO[\mathord{<},\calL^\N]$-uniform $\AC{0}[\calQ]$ \cite{bela06}, 
  the above property equivalently states that
  $$\AC{0}[\calQ] \cap \calL =  \FO[\mathord{<},\calL^\N]\text{-uniform}\
  \AC{0}[\calQ] \cap\calL.$$ 
  
  As a consequence of
  Proposition~\ref{prop:regular-instances}\,(\ref{prop:regular-instances-aper}--\ref{prop:regular-instances-reg}),
  the Uniformity Duality Property is equivalent to the instances
  of the Straubing conjectures obtained by setting $\calQ$ and
  $\calL$ as we expect, for example $\calQ\subseteq \{\exists\} \cup \{
  \exists^{(q,r)}: 0 \leq r < q\}$ and $\calL= \REG$ yield
  exactly \eqref{conj:reg}.
  Similarly, as a consequence of
  Proposition~\ref{prop:regular-instances}\,(\ref{prop:regular-instances-neutral}),
  the Uniformity Duality Property is equivalent to the Crane Beach
  Conjecture if $\FO[\mathord{<}] \subseteq \calL$.
  Property~\ref{property:duality} is thus false when
  $\calQ=\{\exists\}$ and $\calL$ is the set $\NEUTRAL$ of all neutral letter
  languages. For some other classes, the Crane Beach Conjecture and thus
  Property~\ref{property:duality} hold: consider for example the case
  $\calL = \REG\cap\NEUTRAL$ \cite{baimlascth05}, or the case
  $\calQ=\{\exists\}$ and $\calL\subseteq \NEUTRAL\cap \FO[\mathord{+}]$.
  Accordingly the Uniformity Duality Property both generalizes the conjectures of Straubing et~al. and captures the intuition underlying the Crane Beach Conjecture. Encouraged by this unification, we will take a closer look at the Uniformity Duality in the case of first-order logic and context-free languages in the next section. 
  
  In the rest of this section, we present an alternative
  characterization of $\calL^\N$ using
  $\FO[\mathord{<}]$-transformations and unary Lindstr\"om
  quantifiers. This is further justification for our definition of $\calL$-numerical 
  predicates. The reader unfamiliar with this topic may
  skip to the end of Section~\ref{sect:duality_property}.
  
\subsection*{Digression: Numerical Predicates and Generalized Quantifiers}
  
  Generalized or Lindstr\"om quantifiers provide a very general yet
  coherent approach to extending the descriptive complexity of
  first-order logics \cite{lin66}. Since we only deal with unary
  Lindstr\"om quantifiers over strings, we will restrict our
  definition to this case. 
  \begin{definition}
    Let $\Delta = \{\letterA_1,\ldots,
    \letterA_t\}$ be an alphabet, $\varphi_1,\ldots, \varphi_{t-1}$ be
    $\FO[\mathord{<}]$-formulae, each with $k+1$ free variables
    $x_1,x_2,\ldots,x_k,y$, and let $\vec x$ abbreviate
    $x_1, x_2, \ldots, x_k$. 
    Further, let $\textsc{Struct}(\sigma)$ denote the set of finite structures 
    $\frakA = \langle \calU^\frakA, \sigma^\frakA \rangle$ over $\sigma$. 
    Then $\varphi_1,\ldots, \varphi_{t-1}$ define an \emph{$\FO[\mathord{<}]$-transformation}
    $$[\varphi_1(\vec x),\ldots, \varphi_{t-1}(\vec x)]\colon \textsc{struct}(\{<\nolinebreak,x_1,\ldots,x_k\}) \to
    \Delta^\star$$ as follows: 
    Let $\frakA \in \textsc{struct}(\{<\nolinebreak,x_1,\ldots,x_k\})$, $x_i^\frakA =c_i \in \calU^\frakA$, $1\leq i \leq k$, and
    $s=|\calU^\frakA|$, then $[\varphi_1(\vec x),\ldots,
    \varphi_{t-1}(\vec x)](\frakA) = v_1\cdots v_s \in \Delta^\star$,
    where 
    $$
      v_i = 
      \begin{cases}
        \letterA_1, & \text{ if } \frakA \models \varphi_1(c_1,\ldots ,c_k, i), \\
        \letterA_j, & \text{ if } \frakA \models \varphi_j(c_1,\ldots ,c_k, i) \land \bigwedge_{l=1}^{j-1} \lnot\varphi_l(c_1,\ldots ,c_k, i), 1 < j < t,\\
        \letterA_t, & \text{ if } \frakA \models \bigwedge_{l=1}^{t-1} \lnot\varphi_l(c_1,\ldots ,c_k, i).
      \end{cases}
    $$      
    A language $L \subseteq \Delta^\star$ and an $\FO[\mathord{<}]$-transformation $[\varphi_1(\vec x),\ldots, \varphi_{t-1}(\vec x)]$ now naturally define a \emph{(unary) Lindstr\"om quantifier} $\calQ_L^\un$ via
    \[\frakA \models \calQ_L^\un y [\varphi_1(\vec x,y),\ldots, \varphi_{t-1}(\vec x,y)] \iff [\varphi_1(\vec x),\ldots, \varphi_{t-1}(\vec x)](\frakA) \in L.\]
    Finally, the set of relations definable by formulae $\calQ^\un_L y [\varphi_1(\vec x,y),\ldots, \varphi_{t-1}(\vec x,y)]$, where $L \in \calL$ and $\varphi_1,\ldots,\varphi_{t-1} \in \FO[\mathord{<}]$, will be denoted by $\calQ^\un_\calL\FO[\mathord{<}]$.
  \end{definition}
  
  The notation $[\varphi_1(\vec x),\ldots, \varphi_{t-1}(\vec x)]$ is chosen to distinguish the variables in $\vec x$ from $y$; the variables in $\vec x$ are interpreted by $\frakA$ whereas $y$ is utilized in the transformation.

  \begin{theorem} \label{thm:num-pred-by-lindstrom-q}
    Let $\cal L$ be a kernel-closed language class which is closed under
    homomorphisms, inverse homomorphisms and intersection with regular
    languages, then $\calL^\N = \calQ^\mathrm{un}_{\calL}\FO[\mathord{<}]$; that
    is, the $\calL$-numerical predicates correspond to the predicates
    definable using a unary Lindstr\"om quantifier over $\calL$ and an
    $\FO[\mathord{<}]$-transformation.
  \end{theorem}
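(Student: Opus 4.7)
The plan is to establish both inclusions of the claimed equality, exploiting the kernel-closedness of $\calL$ together with the stated closure properties. A useful preliminary is the observation that every $L\in\calL$ with $L\subseteq\Gamma_n^\star$ also yields $L\cdot(0,\emptyset)^\star\in\calL$: introduce a fresh marker $\$\notin\Gamma_n$, take the inverse image of $L$ under the homomorphism that erases $\$$, intersect with the regular language $\Gamma_n^\star\$^\star$ to force all $\$$'s to the right end, and then apply the homomorphism that sends $\$\mapsto(0,\emptyset)$. Combined with kernel-closedness of $\calL$, this lets us replace any language representing a given $\calL$-numerical predicate by one that is simultaneously kernel-closed and padding-closed while staying in $\calL$.

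For the forward inclusion $\calL^\N\subseteq\calQ^\un_\calL\FO[\mathord{<}]$, given $R=L^\N$ with $L\subseteq\VSTRn$, I would first replace $L$ by a kernel- and padding-closed $L''\in\calL$ as above, and then define the $\FO[\mathord{<}]$-transformation indexed by the nonempty subsets of $\calV_n$:
\[
\varphi_V(\vec x,y)\;\equiv\;\bigwedge_{x_i\in V} y=x_i\;\land\;\bigwedge_{x_i\notin V} y\neq x_i,
\]
with $(0,\emptyset)$ as the default letter. On any structure $\langle\{1,\ldots,m\},<,\vec x=\vec c\rangle$ with $m\ge c_\mathrm{max}$, this transformation outputs the canonical padded unary $\calV_n$-structure $[x_1=c_1,\ldots,x_n=c_n](0,\emptyset)^{m-c_\mathrm{max}}$, which lies in $L''$ precisely when $\vec c\in L^\N$. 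Hence the Lindstr\"om formula $\calQ_{L''}^\un y[\varphi_{V_1}(\vec x,y),\ldots,\varphi_{V_{2^n-1}}(\vec x,y)]$ defines $R$.

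For the reverse inclusion $\calQ^\un_\calL\FO[\mathord{<}]\subseteq\calL^\N$, given a formula $\calQ^\un_L y[\varphi_1,\ldots,\varphi_{t-1}]$ with $L\in\calL\subseteq\Delta^\star$ defining a relation $R$, I would set $L'=\{\underline{w}\in\VSTRn : T(\frakA_{\underline{w}})\in L\}$ and realize it as $L'=h_1(h_2^{-1}(L)\cap K)$, where $h_1,h_2$ are the two coordinate projections $(\Gamma_n\times\Delta)^\star\to\Gamma_n^\star$ and $(\Gamma_n\times\Delta)^\star\to\Delta^\star$, and $K$ is the regular language enforcing that (a) the $\Gamma_n$-component is a unary $\calV_n$-structure, and (b) the letter of the $\Delta$-component at every position matches the transformation's prescribed output at that position. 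Condition (b) is expressible in $\FO[\mathord{<}]$ over the combined alphabet, because the positions of the constants $c_j$ are recoverable from the $\Gamma_n$-component and each $\varphi_j$ is already in $\FO[\mathord{<}]$; hence $K$ is star-free regular. Closure under inverse homomorphism, intersection with a regular language, and homomorphism then gives $L'\in\calL$, with $L'\subseteq\VSTRn$ automatic from clause~(a). Finally $L'^\N=R$ follows because well-definedness of the Lindstr\"om formula forces $T(\frakA_m)\in L$ to have the same truth value for all $m\ge c_\mathrm{max}$, reconciling the existential quantification implicit in $L'^\N$ with the universal requirement in the definition of $R$. The main obstacle is this reverse direction: verifying that $K$ is genuinely regular via the star-freeness of $\FO[\mathord{<}]$ and marshalling the three closure hypotheses cleanly over the combined alphabet $\Gamma_n\times\Delta$.
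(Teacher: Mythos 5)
Your proposal is correct and follows essentially the same route as the paper: the forward inclusion via the subset-indexed $\FO[\mathord{<}]$-transformation $\varphi_V(\vec x,y)\equiv\bigwedge_{x_i\in V}y=x_i\land\bigwedge_{x_i\notin V}y\neq x_i$ together with kernel-closedness, and the reverse inclusion via a Nivat-style decomposition $h_1(h_2^{-1}(L)\cap K)$ over the product alphabet $\Gamma_n\times\Delta$, with the consistency language $K$ shown regular through $\FO[\mathord{<}]$-definability. Your preliminary padding-closure observation is a small technical refinement of the paper's handling of kernels, not a different argument.
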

  
  \begin{proof}
    For the inclusion from left to right, consider a relation $L^\N
    \subseteq \N^n$ in $\calL^\N$ and let $\calV_n=\{x_1,\ldots,
    x_n\}$. Define the $\FO[\mathord{<}]$-transformation $[\varphi_1(\vec
    x),\ldots,\varphi_{2^n-1}(\vec x)]$ from $\N^n$ to unary $\cal
    V$-structures as follows. For $1 \leq i < 2^n$, let  
    $$
      \varphi_i \equiv \bigwedge_{j \in V_i} (y = x_j) \land \bigwedge_{j \notin V_i} (y \neq x_j),
    $$ 
    where $V_i$ denotes the $i$th subset of $\cal V$ in the natural subset ordering, and associate the letter $(0,V_i)$ with $\varphi_i$.
    Let $\vec c= (c_1,\ldots c_n) \in \N^n$ and denote by $\frakA$ the
    structure $(\{1,\ldots,l\},\sigma)$ with $l=\max_i\{c_i\}$
    and $\{<,x_1,\ldots,x_n\}\subseteq \sigma$.
    Then $[\varphi_1(\vec
    x),\ldots,\varphi_{2^n-1}(\vec x)]$ maps $\frakA$ to the unary
    $\calV_n$-structure $v_1 \cdots v_l$. Then
    \begin{eqnarray*}
      \vec c \in L^\N & \Longrightarrow & \exists j\exists 
      \underline{w}=[x_1=c_1,\ldots, x_n=c_n]\colon \underline{w}(0,\emptyset)^j\in L \\
      & \Longrightarrow & [x_1=c_1,\ldots, x_n=c_n] \in L \ \ \mbox{
      (since $L$ is kernel-closed)}\\   
      & \Longrightarrow & [\varphi_1(\vec
      x),\ldots,\varphi_{2^n-1}(\vec x)](\frakA) \in L \\
      & \Longrightarrow & \frakA \models \calQ^\mathrm{un}_L y
      [\varphi_1(\vec x,y),\ldots,\varphi_{2^n-1}(\vec x,y)],
    \end{eqnarray*}
    and the reverse implications hold for any unary $\calV_n$-language $L$.

    For the opposite inclusion, let $R \subseteq \N^n \cap
    \calQ^\mathrm{un}_L\FO[\mathord{<}]$ via the transformation $[\varphi_1(\vec
    x),\ldots,\varphi_{k-1}(\vec x)]$ and the language $L \subseteq
    \Delta^\star \cap \calL$, $|\Delta|=k$.
    We must exhibit a unary $\calV_n$-language $A\in\calL$ such that
    $R=A^\N$, i.\,e., $R=h(\kernel{A})$, where
    \begin{align*}
      h\colon \{ \kernel{\underline{w}} : \underline{w} \mbox{ is a unary $\calV_n$-structure}\} &\rightarrow \N^n\\
      [x_1=c_1,\ldots, x_n=c_n]&\mapsto (c_1,\ldots,c_n).
    \end{align*}
    Our proof is similar to the proof of Nivat's Theorem, see \cite[Theorem~2.4]{lamcscvo01}.

    Define $B \subseteq
    \big( \Gamma_n \times \Delta \big)^\star$
    to consist of all words
    $\binom{\underline{u}_1}{v_1}\cdots\binom{\underline{u}_l}{v_l}$ such that
    $\underline{u}=\underline{u}_1 \cdots \underline{u}_l$ is the kernel of a unary
    $\calV_n$-structure and 
    $[\varphi_1(\vec x),\ldots,\varphi_{k-1}(\vec x)]$ maps
    $\underline{u}^\N$ to $v=v_1 \cdots v_l$. 
    Define the length-preserving homomorphisms 
    $f\colon \binom{\underline{u}}{v} \mapsto \underline{u}$ 
    and 
    $g\colon \binom{\underline{u}}{v} \mapsto v$.
    Then $R = (h\circ \mathrm{kern}\circ f)\big(B \, \cap\,
  g^{-1}(L)\big)$.

    We claim that $B$ is
    regular. Theorem~\ref{thm:num-pred-by-lindstrom-q} then follows
    from the closure properties of $\calL$ by setting $A=f\big(B \,
    \cap\, g^{-1}(L)\big)$
    
    For $1\leq i \leq k-1$, let $\varphi_i'(y)$ be defined as $\varphi_i'(y) \equiv \exists x_1 \cdots \exists x_n ( \varphi_i(\vec x,y) \land \psi(\vec x) \land \pi(\vec x))$, where 
    $\psi(\vec x)$ and $\pi(\vec x)$ bind the variables in $\vec x$ to their respective values in a corresponding unary $\calV_n$-structure
    and asserts that each variable occurs exactly once; that is,
    \begin{align*}
      \psi(\vec x) \equiv &\,
        \bigwedge_{1\leq i \leq n} \,
        \bigvee_{d \in \Delta} \,
        \bigvee_{\substack{V \in \powerset{\calV_n}, \\ x_i \in V}}
        P_{\binom{(0,V)}{d}} (x_i),
        \\
      \pi(\vec x) \equiv &\,
        \bigwedge_{1\leq i \leq n}
        \forall z 
          \Big(
            \big(
              \bigvee_{d \in \Delta} \,
              \bigvee_{\substack{V \in \powerset{\calV_n}, \\ x_i \in V}}
              P_{\binom{(0,V)}{d}} (z) 
            \big) \leftrightarrow z=x_i
          \Big).
    \end{align*}
    Now let  
    $\chi_j(z) \equiv \bigvee_{V \in \powerset{\calV_n}}
    P_{\binom{(0,V)}{d_j}}(z), 1 \leq j \leq k$, where $d_j$ is the
    $j$th letter in $\Delta$. Then a string
    $\binom{\underline{u}_1}{v_1}\cdots\binom{\underline{u}_l}{v_l} \in \big( \Gamma_n
    \times \Delta \big)^\star$ is in $B$ if and only if
    $\underline{u}=\underline{u}_1\cdots \underline{u}_l$ is the kernel of a unary
    $\calV_n$-structure and
    $$
      \forall z 
      \bigg(
        \bigwedge_{i=1}^{k-1}
        \Big( \big( \varphi_i'(z) \land \bigwedge_{l=1}^{i-1} \neg\varphi_l'(z) \big) \leftrightarrow \chi_i(z) \Big) 
        \land 
        \Big( \bigwedge_{l=1}^{k-1} \neg\varphi_l'(z) \leftrightarrow \chi_k(z) \Big) 
      \bigg)
    $$
    holds on $\vec z=\underline{u}^\N$, where the empty conjunction is defined to be true. Concluding, $B \in \FO[\mathord{<}] \subset \REG$.
  \end{proof}

  We stress that the above result provides a logical characterization
  of the $\calL$-numerical predicates for all kernel-closed classes $\calL$ forming
  a cone, viz. a class of languages $\calL$ closed
  under homomorphisms, inverse homomorphisms and intersection with
  regular languages \cite{gigrho69}. As the closure under these operations is
  equivalent to the closure under rational transductions
  (i.\,e., transductions performed by finite automata \cite{ber79}), we
  obtain: 
  
  \begin{corollary}
    Let $\calL$ be kernel-closed and closed under rational transductions, then 
    $\calL^\N = \calQ^\mathrm{un}_{\calL}\FO[\mathord{<}]$.
  \end{corollary}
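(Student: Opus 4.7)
The plan is to observe that this corollary is essentially a direct translation of Theorem~\ref{thm:num-pred-by-lindstrom-q} under a well-known equivalence from formal language theory. Specifically, Theorem~\ref{thm:num-pred-by-lindstrom-q} assumes that $\calL$ is kernel-closed and closed under (i) homomorphisms, (ii) inverse homomorphisms, and (iii) intersection with regular languages, which is exactly the definition of a cone (or full trio) in the sense of Ginsburg--Greibach. The classical Nivat/Berstel characterization \cite{ber79} states that a language class is a cone precisely when it is closed under rational transductions, i.\,e., under the partial functions computed by non-deterministic finite-state transducers.

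Given this equivalence, the first step of my proof would simply be to invoke the Nivat--Berstel theorem to rewrite the hypothesis ``closed under rational transductions'' as the conjunction of closure under homomorphisms, inverse homomorphisms, and intersection with regular languages. The second and final step is to apply Theorem~\ref{thm:num-pred-by-lindstrom-q} verbatim to obtain $\calL^\N = \calQ^\mathrm{un}_{\calL}\FO[\mathord{<}]$.

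Because the only non-trivial content is already packed into Theorem~\ref{thm:num-pred-by-lindstrom-q}, there is no real obstacle: the corollary is a one-line deduction. The main care to take is to make sure we are citing the correct form of the Nivat--Berstel equivalence (the one that treats \emph{non-deterministic} rational transductions and matches the three closure properties exactly), so as not to silently need a fourth closure condition. Provided the reference \cite{ber79} is understood in its standard form, the argument goes through with no additional ingredients.
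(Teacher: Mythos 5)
Your proposal is correct and takes exactly the same route as the paper: the authors likewise note that closure under homomorphisms, inverse homomorphisms and intersection with regular languages (i.e., being a cone) is equivalent to closure under rational transductions, and then apply Theorem~\ref{thm:num-pred-by-lindstrom-q} verbatim. Nothing further is needed.
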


\section{Characterizing the Context-Free Numerical Predicates} \label{sect:cfl_num_pred}

  In order to examine whether the Uniformity Duality Property for first-order logics holds in the case of context-free languages, we first need to consider the counterpart of the regular numerical predicates, that is, $\CFL^\N$. Our results in this section will relate $\CFL^\N$ to addition w.\,r.\,t. to first-order combinations, and are based upon a result by Ginsburg \cite{ginsburg66}. Ginsburg showed that the number of repetitions per fragment in bounded context-free languages corresponds to a subset of the semilinear sets. 
  For a start, note that addition is definable in $\DCFL^\N$. 
  
  \begin{lemma} \label{lem:add-in-cfl^n}
    Addition is definable in $\DCFL^\N$.
  \end{lemma}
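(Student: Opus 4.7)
The plan is to exhibit a concrete deterministic pushdown automaton $M$ whose accepted language $L$ consists of those unary $\calV_3$-structures encoding triples $(a,b,c)\in\N^3$ with $a+b=c$, where $x_i$ sits at position $a$, $b$, $c$ respectively for $i=1,2,3$. Given such an $M$, Definition~\ref{numerical} immediately hands us the ternary addition predicate as an element of $\DCFL^\N$.

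The automaton I would build reads its input left to right and uses its stack as a single counter, driven through four phases by the finite control. In the initial \emph{push} phase, it pushes one symbol per letter consumed, until a letter containing $x_1$ or $x_2$ arrives. After that letter (say at position $\min(a,b)$) it moves to an \emph{idle} phase in which the stack is left untouched, until the letter containing the other of $x_1,x_2$ is read at position $\max(a,b)$. From the next letter onward it enters the \emph{pop} phase, popping one symbol per letter read, and when it finally reads the letter containing $x_3$ at position $c$ the stack is required to become empty exactly at that moment, at which point the control moves to the \emph{accept} phase that only tolerates further $(0,\emptyset)$-padding. The degenerate case $a=b$ is handled by a dedicated transition that, upon reading $(0,\{x_1,x_2\})$ during the push phase, pushes once and jumps straight to the pop phase, bypassing the idle phase.

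Correctness will follow from a direct counting argument: exactly $\min(a,b)$ pushes, zero net stack change during the idle phase, and exactly $c-\max(a,b)$ pops by the time $x_3$ is consumed, so stack-emptiness at that moment is equivalent to $\min(a,b)=c-\max(a,b)$, i.e., to $a+b=c$. Determinism is automatic since every move is determined by current state, current input letter, and top-of-stack symbol (a bottom marker detects emptiness); note also that $a,b\geq 1$ forces $c$ to be strictly larger than both $a$ and $b$, so $x_3$ genuinely appears in a letter of its own rather than being fused with $x_1$ or $x_2$.

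Since this is a straightforward automaton construction, no single step really constitutes a serious obstacle. The mildly delicate part will be folding the syntactic well-formedness check---each $x_i$ occurring exactly once, $x_3$ arriving strictly after both $x_1$ and $x_2$, and only $(0,\emptyset)$ following $x_3$---into the same finite control without breaking determinism, but this is absorbed by a routine case split over the orderings of $a$ and $b$ and by rejecting the moment any transition expected by the current phase fails.
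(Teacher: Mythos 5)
Your construction is correct and is essentially the paper's own proof: the paper likewise exhibits an explicit deterministic PDA for a unary $\calV_3$-language $L_+$ with $L_+^\N=\{(x_1,x_2,x_3): x_1+x_2=x_3\}$, using the stack as a single counter that is incremented once per position up to the first of $x_1,x_2$, held fixed until the second, and decremented until $x_3$, accepting iff it empties exactly there. Your version additionally makes explicit the degenerate case $a=b$ (the letter $(0,\{x_1,x_2\})$) and the symmetric treatment of the two orderings, which the paper's transition table glosses over.
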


  \begin{proof}
    Let $\calV_3=\{x_1,x_2,x_3\}$ and $L_+$ be a unary $\calV_n$-language defining addition, that is, $L_+^\N= \{ (x_1,x_2,x_3) : x_1+x_2 = x_3 \}$. Then $L_+$ is recognized by the following deterministic PDA $P=(Q,\Gamma_n, \Delta, \delta, q_0,\bot, \{q_\text{acc}\})$, where
    $Q=\{q_0, q_\text{acc} \}$, $\Delta=\{\bot,0\}$ and $\delta$ is defined as follows:
    \[
    \begin{array}{@{\delta(}l@{,\, }l@{,\, }r@{)\ = (}l@{,\, }r@{) \qquad}l@{ \delta(}l@{,\, }l@{,\, }r@{)\ = (}l@{,\, }r@{) }}
      z_0    &(0,\emptyset) &\gamma  & z_0          & 0\gamma && z_y    &(0,\emptyset) &\gamma  & z_y          & \gamma \\
      z_0    &(0,\{x\})     &\gamma  & z_x          & 0\gamma && z_y    &(0,\{x\})     &\gamma  & z_{xy}       & \gamma \\
      z_0    &(0,\{y\})     &\gamma  & z_y          & 0\gamma && z_{xy} &(0,\emptyset) &0       & z_{xy}       & \varepsilon \\
      z_x    &(0,\emptyset) &\gamma  & z_x          & 0\gamma && z_{xy} &(0,\{z\})     &0       & z_{z}        & \varepsilon \\
      z_x    &(0,\{y\})     &\gamma  & z_{xy}       & \gamma  && z_{z}  &\hphantom{(}\varepsilon   &\bot    & z_\text{acc} & \bot   \\
    \end{array}
    \] where $\gamma \in \Delta$.
  \end{proof}

  Next, we restate the result of Ginsburg in order to prepare ground for the examination of the context-free numerical predicates.
  In the following, let $w^\star$ abbreviate $\{w\}^\star$ and say that a language $L \subseteq \Sigma^\star$ is \emph{bounded} if there exists an $n \in \N$ and $w_1,\ldots, w_n \in \Sigma^+$ such that $L \subseteq w_1^\star \cdots w_n^\star$. 
    
  \begin{definition}
    A set $R \subseteq \N_0^n$ is \emph{stratified} if 
    \begin{enumerate}
      \item each element in $R$ has at most two non-zero coordinates, 
      \item there are no integers $i,j,k,l$ and $x=(x_1,\ldots, x_n), x'=(x_1',\ldots, x_n')$ in $R$ such that 
      $1\leq i < j < k < l \leq n$ and $x_ix_j'x_kx_l' \neq 0$.
    \end{enumerate}
    Moreover, a set $S \subseteq \N^n$ is said to be \emph{stratified semilinear} 
    if it is expressible as a finite union of linear sets, each with a stratified set of periods;
    that is, $S = \bigcup_{i=1}^m \{ \vec \alpha_{i0} + \sum_{j=1}^{n_i} k\cdot\vec{\alpha}_{ij} : k\in \N_0 \}$, 
    where $\vec\alpha_{i0} \in \N^n$, $\vec\alpha_{ij} \in \N_0^n$, $1 \leq j \leq n_i$, $1\leq i \leq m$, and each $P_i=\{ \vec\alpha_{ij} : 1 \leq j \leq n_i \}$ is stratified.
    
  \end{definition}
  
  \begin{theorem}[{\cite[Theorem 5.4.2]{ginsburg66}}] \label{thm:ginsburg}
    Let $\Sigma$ be an alphabet and $L \subseteq w_1^\star \cdots w_n^\star$ be bounded by $w_1, \ldots, w_n \in \Sigma^+$. Then $L$ is context-free if and only if the set 
    \[
      E(L) = \big\{(e_1,\ldots, e_n) \in  \N_0^n : w_1^{e_1}\ldots w_n^{e_n} \in L \big\}
    \]
    is a stratified semilinear set.
  \end{theorem}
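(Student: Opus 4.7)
The plan is to prove both implications by exploiting the non-crossing structure of stratified period sets. In the forward direction (context-free implies stratified semilinear), the key observation is that pumping in a context-free grammar respects tree order, which prevents period vectors from having crossing supports. In the reverse direction (stratified semilinear implies context-free), stratification lets us realize the period structure by nested grammar rules.

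For the easier reverse direction, since $\CFL$ is closed under union, it suffices to build a CFG for a single linear set with stratified periods, say $S=\{\vec\alpha_0+\sum_{j=1}^m k_j\vec\alpha_j : k_j\in\N_0\}$. A period $\vec\alpha_j$ with $\mathrm{supp}(\vec\alpha_j)=\{i,k\}$, $i<k$, is realised by a nonterminal with rule $A_j\to w_i^{\alpha_{ji}}\,A_j\,w_k^{\alpha_{jk}}\mid\varepsilon$; a period with singleton support $\{i\}$ gives a rule $A_j\to w_i^{\alpha_{ji}}A_j\mid\varepsilon$; the constant offset $\vec\alpha_0$ is generated by a starting production that expands linearly into $w_1^{\alpha_{01}}\cdots w_n^{\alpha_{0n}}$ interspersed with calls to the $A_j$ at the appropriate boundary positions. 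Stratification ensures that the supports of the periods form a laminar family of subintervals of $\{1,\ldots,n\}$, so these nonterminal calls can be nested consistently inside the expansion of the constant word: whenever two two-coordinate supports interact, one is strictly nested inside the other rather than crossing it, and this is precisely what keeps the grammar context-free rather than forcing context-sensitive behaviour.

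For the forward direction, take a CFG $G$ in Chomsky normal form with $L(G)=L\subseteq w_1^\star\cdots w_n^\star$. Parikh's theorem yields semilinearity of the Parikh image, which, via the unique factorisation $w_1^{e_1}\cdots w_n^{e_n}$ available on bounded languages, already gives semilinearity of $E(L)$. The novel part is stratification. I would use the derivation-tree proof of Parikh's theorem: decompose $L$ according to the finite set of ``pump-free skeletons'' obtained by contracting pumpable subtrees of derivations, and associate one linear set to each skeleton. Within a fixed skeleton, each period vector comes from pumping at a specific subtree; boundedness of $L$ forces the left and right yields of that pump to be of the form $w_i^p$ and $w_k^q$ with $i\le k$, so the period has at most two non-zero coordinates, $i$ and $k$. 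Two pump positions inside one skeleton are either comparable in the tree order (nested, yielding nested support intervals) or lie in disjoint subtrees (yielding supports in disjoint intervals of $\{1,\ldots,n\}$); in both cases the supports of the corresponding period vectors do not cross. The stratification condition is exactly this non-crossing statement, and it is inherited directly from the tree order on derivations.

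The main obstacle, as I see it, is the bookkeeping passage from ``each skeleton produces a stratified family of periods'' to the required finite union of linear sets with stratified period sets. Parikh's theorem provides a finite decomposition but does not track which period belongs to which skeleton, so one has to redo that construction skeleton by skeleton and argue that only finitely many skeletons arise (bounded by the number of nonterminals and the pumping lemma's constant). Once this finiteness is secured, the geometric observation---that pump positions form a laminar family because the derivation tree does---delivers stratification for each summand, and taking the finite union then completes the proof.
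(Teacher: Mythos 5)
First, a point of order: the paper contains no proof of this statement to compare yours against---it is imported verbatim as Theorem~5.4.2 of Ginsburg's 1966 book and used as a black box. So you are re-deriving a known and genuinely technical classical theorem. That said, your high-level strategy does track the classical argument: in the reverse direction, realizing each two-coordinate period by a self-embedding production $A\to w_i^{a}Aw_k^{b}\mid\varepsilon$ and using non-crossing to place these inside the expansion of the constant vector; in the forward direction, extracting periods from pumps in derivation trees and deriving the non-crossing condition from the tree order.

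There are, however, concrete gaps. (1) Your claim that semilinearity of $E(L)$ follows from Parikh's theorem ``via the unique factorisation'' is false as stated: the words $w_1,\ldots,w_n$ need not be distinct or form a code (take $w_1=w_2=a$), so a single word of $L$ may admit many exponent tuples, $E(L)$ is by definition the set of \emph{all} of them, and there is no well-defined map from the Parikh vector to an exponent tuple; handling this ambiguity is a substantial part of Ginsburg's proof. (2) The assertion that boundedness forces a pump's two yields to be $w_i^{p}$ and $w_k^{q}$ is precisely the combinatorial lemma you would have to prove: a priori an iterable factor $u$ could straddle a block boundary, and one needs the fact that if $u^{p}$ occurs as a factor of words of $w_1^\star\cdots w_n^\star$ for unbounded $p$ then $u$ is a power of a conjugate of a single $w_i$---and even then one must decide \emph{which} coordinate it contributes to when the $w_i$ repeat, which feeds back into (1). (3) In the reverse direction, ``laminar family'' is not the right invariant: stratification forbids only crossings $i<j<k<l$, so two supports such as $\{i,j\}$ and $\{j,k\}$ may share an endpoint without being disjoint or nested; the construction survives (shared endpoints become concatenation of the two self-embedding nonterminals rather than nesting), but as written your justification does not cover these cases. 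None of this is fatal to the strategy, but each item is a real step that the sketch asserts rather than proves, and together they constitute most of the actual content of Ginsburg's theorem.
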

  
  Theorem~\ref{thm:ginsburg} relates the bounded context-free languages to a strict subset of the semilinear sets. 
  The semilinear sets are exactly those sets definable by $\FO[\mathord{+}]$-formulae.   
  There are however sets in $\FO[+]$ that are undefinable in $\CFL^\N$: e.\,g., if $R=\{(x,2x,3x) : x \in \N\}$ 
  was definable in $\CFL^\N$ then $\{\letterA^n\letterB^n\letterC^n : n \in \N\} \in \CFL$.
  Hence, $\FO[\mathord{+}]$ can not be captured by $\CFL^\N$ alone. Yet, addition is definable in $\CFL^\N$, 
  therefore we will in the following investigate the relationship between
  first-order logic with addition, $\FO[\mathord{+}]$, and the Boolean closure of
  $\CFL$, $\bc{\CFL}$. 
  
  \begin{lemma} \label{lem:a_bounded}
    Let $R \subseteq \N^n$ and let $R=L^\N$ for language $L$. Then $L$ is bounded.
  \end{lemma}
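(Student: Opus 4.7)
My plan is to exploit the rigid combinatorial shape of a unary $\calV_n$-structure: the nonempty sets among the $V_i$ form a partition of the $n$-element set $\calV_n$, so at most $n$ positions of $\underline{w}=(0,V_1)\cdots(0,V_m)$ carry a nonempty letter, and moreover each nonempty letter $(0,V)$ with $V\neq\emptyset$ appears at most once along $\underline{w}$ (distinct blocks of the partition are distinct letters). All other positions carry the letter $e:=(0,\emptyset)$.

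First I would enumerate, once and for all, the finitely many ordered sequences $(V_{i_1},\ldots,V_{i_t})$ of nonempty subsets of $\calV_n$ that partition $\calV_n$ (with $1\le t\le n$); call them shapes $\pi_1,\ldots,\pi_M$. Any structure $\underline{w}\in L\subseteq\VSTRn$ has some shape $\pi_j=(V_{i_1},\ldots,V_{i_t})$ and thus lies in the bounded set
\[
 B_j \;=\; e^\star\,(0,V_{i_1})\,e^\star\,(0,V_{i_2})\,e^\star\cdots e^\star\,(0,V_{i_t})\,e^\star.
\]
Hence $L\subseteq B_1\cup\cdots\cup B_M$.

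Next I would invoke the easy closure of bounded languages under finite unions: if $L_1\subseteq u_1^\star\cdots u_p^\star$ and $L_2\subseteq v_1^\star\cdots v_q^\star$, then $L_1\cup L_2\subseteq u_1^\star\cdots u_p^\star v_1^\star\cdots v_q^\star$, because any word in $L_1$ realizes this product with all $v_j$-exponents equal to $0$, and symmetrically for $L_2$. Iterating this observation over $B_1,\ldots,B_M$ exhibits the single product of words $w_1,\ldots,w_k\in\Gamma_n^+$ bounding $L$.

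There is no real obstacle here; the only subtlety is recording why each nonempty letter can appear at most once in a unary $\calV_n$-structure, which follows directly from the partition condition in the definition of $\calV_n$-structure. Everything else is bookkeeping over the finitely many shapes and a one-line argument for closure under union.
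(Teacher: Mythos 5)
Your proposal is correct and follows essentially the same route as the paper: decompose $L$ according to the finitely many ordered partitions (``order types'') of $\calV_n$, note that each piece sits inside a bounded language of the form $e^\star(0,V_{i_1})e^\star\cdots(0,V_{i_t})e^\star$, and absorb the finite union into a single product of starred letters. The paper phrases the last step as $L \subseteq \prod_{1 \leq i \leq p} \bigl( \prod_{V \subseteq \calV_n} (0,V)^\star \bigr)^{2n+1}$ rather than via your explicit union-closure observation, but this is only a cosmetic difference.
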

  
  \begin{proof}
    Let $R \subseteq \N^n$ and $L$ such that $R=L^\N$. Let $L \subseteq \Gamma_n^\star$, where $\calV_n=\{x_1,\ldots,x_n\}$ is a set of variables. Every unary $\calV_n$-structure $\underline{w} \in L$ of a given order type $t \in T$ belongs to the language
    $$
    L_t = \left(0,\emptyset \right)^\star
          \left(0,V_1 \right)
          \left(0,\emptyset \right)^\star
          \left(0,V_2 \right)
          \cdots
          \left(0,V_k \right)
          \left(0,\emptyset \right)^\star,
    $$
    where $(V_1, V_2, \ldots, V_k)$ is an ordered partition of $\calV_n$. The number $p$ of such ordered partitions depends on $n$; in particular, $p$ is finite. Hence $L \subsetneq L_1^\star L_2^\star \cdots L_{t_p}^\star$, where $t_1,t_2,\ldots,t_p$ exhaust the possible types. Since each letter $(0,V_i)$ belongs to the language $\prod_{V \subseteq \calV_n} (0,V)^\star$, it follows that $L \subseteq \prod_{1 \leq i \leq p} \left( \prod_{V \subseteq \calV_n} (0,V)^\star \right)^{2n+1}$.
  \end{proof}
  
  \begin{lemma} \label{lem:cfl+bounded->fo[+]}
    All bounded context-free languages are definable in $\FO[\mathord{+}]$.
  \end{lemma}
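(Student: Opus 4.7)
The plan is to reduce definability in $\FO[\mathord{+}]$ to the combinatorial description of $L$ supplied by Theorem~\ref{thm:ginsburg}. Fix $w_1,\ldots,w_n\in\Sigma^+$ with $L\subseteq w_1^\star\cdots w_n^\star$, write $\ell_i=|w_i|$ and $w_i=w_i^{(1)}\cdots w_i^{(\ell_i)}$. By Theorem~\ref{thm:ginsburg}, the set $E(L)\subseteq\N_0^n$ is (stratified) semilinear, hence a finite union of linear sets; by the classical Ginsburg--Spanier characterisation, every semilinear subset of $\N^n$ is $\FO[\mathord{+}]$-definable, so there is an $\FO[\mathord{+}]$-formula $\psi(y_1,\ldots,y_n)$ (with free number variables) such that $\vec y\in E(L)$ iff $\psi(\vec y)$ holds.

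I would then build an $\FO[\mathord{+}]$-formula $\varphi_L$ over strings saying ``there exist multiplicities $e_1,\ldots,e_n$ with $(e_1,\ldots,e_n)\in E(L)$ and the input word factors as $w_1^{e_1}\cdots w_n^{e_n}$''. Concretely, introduce existential variables $e_1,\ldots,e_n$ and ``cut points'' $s_0,s_1,\ldots,s_n$, and assert
\begin{enumerate}
\item $s_0=0$ (i.\,e.\ $\forall z\,(s_0\leq z)$ up to off-by-one), $s_i=s_{i-1}+\underbrace{e_i+\cdots+e_i}_{\ell_i\ \text{times}}$ for $1\leq i\leq n$, and $s_n$ is the maximal position of the structure;
\item $\psi(e_1,\ldots,e_n)$;
\item for every position $p$ and every $i$, if $s_{i-1}<p\leq s_i$ then the letter at $p$ is the correct letter of $w_i$.
\end{enumerate}
Clauses (1) and (2) are immediately in $\FO[\mathord{+}]$ since the $\ell_i$ are fixed constants, so $e_i\cdot\ell_i$ is expressible as an $\ell_i$-fold sum.

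The main technical point, which I would treat carefully, is clause~(3): for a position $p$ lying in the $i$th block, its offset $o=p-s_{i-1}$ lies in $\{1,\ldots,e_i\ell_i\}$, and the required letter is $w_i^{(r+1)}$ where $r\equiv o-1\pmod{\ell_i}$. Since $\ell_i$ is a \emph{constant}, the congruence $o\equiv r+1\pmod{\ell_i}$ is expressed in $\FO[\mathord{+}]$ by $\exists k\,(o=\underbrace{k+\cdots+k}_{\ell_i}+r+1)$. Thus for each $i\in\{1,\ldots,n\}$ and each $r\in\{0,\ldots,\ell_i-1\}$, one forms the subformula
\[
\forall p\,\Bigl(\bigl(s_{i-1}<p\leq s_i\wedge \exists k\,(p=s_{i-1}+\underbrace{k+\cdots+k}_{\ell_i}+r+1)\bigr)\rightarrow P_{w_i^{(r+1)}}(p)\Bigr),
\]
and takes the finite conjunction over $i$ and $r$. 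Combining all three clauses yields $\varphi_L\in\FO[\mathord{+}]$ with $\frakA_w\models\varphi_L$ iff $w\in L$.

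The only real obstacle is bookkeeping: one must check that the quantified $e_i$, $s_i$, $k$ all range over values bounded by $|w|$ (so that they are legitimate first-order variables in the string structure), and that the Ginsburg--Spanier translation of $E(L)$ really produces an $\FO[\mathord{+}]$-formula rather than, say, an $\FO[\mathord{+},\times]$-one. Both are routine, because every linear set $\{\vec\alpha_0+\sum_j k_j\vec\alpha_j : k_j\in\N_0\}$ with constant $\vec\alpha_j$ is defined by the $\FO[\mathord{+}]$-formula $\exists k_1\cdots\exists k_m\,\bigwedge_i y_i=\alpha_{0,i}+k_1\alpha_{1,i}+\cdots+k_m\alpha_{m,i}$, each $k_j\alpha_{j,i}$ being an $\alpha_{j,i}$-fold sum with constant $\alpha_{j,i}$. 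This completes the plan.
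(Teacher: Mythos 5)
Your proposal is correct and follows essentially the same route as the paper: apply Theorem~\ref{thm:ginsburg} to get that $E(L)$ is semilinear, define the exponents in $\FO[\mathord{+}]$ via the standard linear-set formula, and existentially guess the multiplicities together with a formula checking that the word has the shape $w_1^{e_1}\cdots w_n^{e_n}$. The only difference is that you spell out the block/offset bookkeeping for that last check (cut points and congruences modulo the constant $|w_i|$ via iterated addition), which the paper leaves implicit in its $\varphi_{w_1,\ldots,w_n}$.
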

  
  \begin{proof}
    Since $L$ is bounded, there exist $w_1,\ldots, w_n \in \Sigma^+$ such that $L \subseteq w_1^\star\ldots w_n^\star$. By Theorem~\ref{thm:ginsburg}, it holds that the set 
    $$
      E(L)=\{(e_1,\ldots, e_n) : w_1^{e_1}\ldots w_n^{e_n} \in L\}
    $$ 
    is stratified semilinear. It follows by semilinearity alone that, for $\vec{e}=(e_1,\ldots,e_n)$, $E(L) = \bigcup_{i=1}^m \{ \vec{e} : \vec{e} = \vec \alpha_{i0} + \sum_{j=1}^{n_i} k\cdot\vec{\alpha}_{ij}, k\in \N_0 \}$, where $\vec\alpha_{i0} \in \N^n$, $\vec\alpha_{ij} \in \N_0^n$ for all $1 \leq j \leq n_i$, $1 \leq i \leq m$. As a consequence, $L$ is defined via the formula 
    $\varphi = \exists e_1 \cdots \exists e_n \big( \varphi_{w_1,\ldots, w_n}(e_1, \ldots, e_n) \land \varphi_\text{lin}(e_1,\ldots,e_n) \big)$,
    where $\varphi_{w_1,\ldots, w_n}(e_1, \ldots, e_n)$ checks whether the input is of the form $w_1^{e_1}\cdots w_n^{e_n}$ and
    \[
      \varphi_\text{lin}(e_1,\ldots,e_n)\equiv \bigvee_{i=1}^m \Big(\exists a_{i,1} \cdots \exists a_{i,n_i}  \bigwedge_{k=1}^n\big( e_k = \alpha_{i0k} + \sum_{j=1}^{n_i} a_{ij}\alpha_{ijk} \big) \Big).
    \]
    In particular, $\varphi_{w_1^{e_1}\ldots w_n^{e_n}} \in \FO[\mathord{+}]$. Concluding, $L \in \FO[\mathord{+}]$.
  \end{proof}
  
  \begin{lemma}\label{lem:a->a^n}
    Let $\calV_n=\{x_1,\ldots,x_n\}$ and let $L$ be a unary $\calV_n$-language. Then $L \in \FO[\mathord{+}]$ implies $L^\N \in \FO[\mathord{+}]$.
  \end{lemma}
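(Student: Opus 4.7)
The plan is to syntactically translate an $\FO[\mathord{+}]$-sentence $\varphi$ defining $L$ over the alphabet $\Gamma_n$ into an $\FO[\mathord{+}]$-formula $\psi(x_1,\ldots,x_n)$ defining $L^\N\subseteq\N^n$. Two points require attention: the alphabet predicates $P_{(0,V)}$ must be expressed in terms of the free variables $x_1,\ldots,x_n$ interpreted as the positions of the members of $\calV_n$ in the unary $\calV_n$-structure, and the formula must existentially quantify over the padding length, since $(c_1,\ldots,c_n)\in L^\N$ only requires that \emph{some} padded word $[x_1=c_1,\ldots,x_n=c_n](0,\emptyset)^j$ lie in $L$, not every such word.

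Accordingly, let $\varphi^{\star}$ be obtained from $\varphi$ by replacing each atom $P_{(0,V)}(z)$, for $V\subseteq\calV_n$, by $\bigwedge_{x_i\in V}(z=x_i)\land\bigwedge_{x_i\notin V}(z\neq x_i)$; a direct check shows that under the interpretation $x_i\mapsto c_i$ this formula holds iff the $z$-th letter of $[x_1=c_1,\ldots,x_n=c_n](0,\emptyset)^j$ equals $(0,V)$, both for positions $z\leq c_\mathrm{max}$ in the kernel and for positions $z>c_\mathrm{max}$ in the padding (where it forces $V=\emptyset$). Next, introduce a fresh variable $y$ representing the length of the padded word, let $\varphi^{\star(y)}$ be the relativization of $\varphi^{\star}$ obtained by guarding each quantifier with the restriction $z\leq y$, and define
\[
  \psi(x_1,\ldots,x_n)\;\equiv\;\exists y\,\Bigl(\bigwedge_{i=1}^{n} x_i\leq y\ \land\ \varphi^{\star(y)}\Bigr).
\]
Since $<$ (and hence $\leq$) is first-order definable from $+$ on $\N$, $\psi\in\FO[\mathord{+}]$. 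Correctness follows by a straightforward induction on the structure of $\varphi$: for every $y_0\geq c_\mathrm{max}$ and every sufficiently large $M\geq y_0$, $\langle\{1,\ldots,M\},+,\vec c\rangle\models\varphi^{\star(y_0)}$ iff $\frakA_{[x_1=c_1,\ldots,x_n=c_n](0,\emptyset)^{y_0-c_\mathrm{max}}}\models\varphi$, so $\psi(\vec c)$ is satisfied on every sufficiently large initial segment precisely when some padding witnesses $\vec c\in L^\N$.

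The one subtlety is that $\psi$ is evaluated under the ``sufficiently large initial segment'' convention of Section~\ref{sect:preliminaries}, whereas $\varphi$ is evaluated on word structures of a fixed length. The two views are reconciled through the outer existential over $y$, but one must verify that numerical addition on $\{1,\ldots,M\}$, restricted via the guards to arguments and results in $\{1,\ldots,y_0\}$, coincides with positional addition inside the word of length $y_0$; this is guaranteed as soon as $M$ is large enough to contain all sums appearing in the relativized formula, which is exactly what the ``sufficiently large'' convention provides. Beyond this book-keeping, no further obstacle arises.
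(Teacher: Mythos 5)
Your proof is correct and rests on the same key device as the paper's: replacing each letter predicate $P_{(0,V)}(z)$ by $\bigwedge_{x_i\in V} z=x_i\land\bigwedge_{x_i\notin V} z\neq x_i$. The difference is that the paper stops after this substitution, tacitly identifying the size of the arithmetic structure with the length of the word being tested, so its formula $\varphi'(\vec c)$ really decides membership of the single word of length exactly $m$ with kernel $[x_1=c_1,\ldots,x_n=c_n]$; your additional existential quantifier over the word length $y$, together with the relativization of all quantifiers to $\{1,\ldots,y\}$, decouples these two lengths and correctly implements the existential character of $L^\N$ (membership of \emph{some} padding of the kernel in $L$). For unary $\calV_n$-languages that are not closed under padding with $(0,\emptyset)$ this extra step is genuinely needed, so your version is the more careful one. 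The only residual caveat, which you flag yourself, is that $\psi(\vec c)$ may still be false on initial segments lying strictly between $c_{\mathrm{max}}$ and the least padding length witnessing $\vec c\in L^\N$; this is absorbed by the ``sufficiently large initial segment'' reading of relation definability that the paper itself uses.
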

  
  \begin{proof}
    Since $L \in \FO[\mathord{+}]$, there exists a formula $\varphi \in \FO[\mathord{+}]$ such that for all $\underline{w}=(0,V_1) \cdots $ $(0,V_m) \in \Gamma_n^\star$,
    $$ \langle \{1,\ldots,m\}, <\nolinebreak, +, \mathbf{P} \rangle \models \varphi \iff \underline{w} \in L, $$
    where $\mathbf{P}=\{\mathrm{P}_{(0,V)} : V \in \powerset{\calV_n} \}$ and $\mathrm{P}_{(0,V)} (z)$ is true if and only if $z \in V$, for $1\leq z \leq m$.
    Let $\vec y = (y_1,\ldots, y_n)$. We construct the formula $\varphi'(\vec y)$ from $\varphi$ by replacing, for each variable $z$ and each $V \in \powerset{\calV_n}$, the predicate $P_{(0,V)}(z)$ with $\bigwedge_{y_i \in V} z=y_i \land \bigwedge_{y_i \notin V} z\neq y_i$.
    Then
    $$ \langle \{1, \ldots, m\}, <\nolinebreak, + \rangle \models \varphi'(\vec y) \iff \langle \{1,\ldots,m\}, <\nolinebreak, +, \mathbf{P} \rangle \models \varphi. $$
    Hence the formula $\varphi'$ witnesses $L^\N \in \FO[\mathord{+}]$.
  \end{proof}
  
  \begin{theorem} \label{thm:cfl subset fo+}
    $\bc{\CFL^\N} \subseteq \bc{\CFL}^\N \subseteq \FO[\mathord{+}].$
  \end{theorem}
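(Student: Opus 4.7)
The plan is to treat the two inclusions $\bc{\CFL^\N} \subseteq \bc{\CFL}^\N$ and $\bc{\CFL}^\N \subseteq \FO[\mathord{+}]$ separately.

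For the first inclusion, the idea is to realize any Boolean combination of $\CFL$-numerical relations as the numerical image of a single $\bc{\CFL}$-language, after first normalizing the witnessing languages to depend only on the kernel. Given $R = B(R_1,\ldots,R_k)$ with $R_i = L_i^\N$, $L_i \in \CFL$, $L_i \subseteq \VSTRn$, I would replace each $L_i$ by its saturation $L_i' = \{\underline{v} \in \VSTRn : \underline{v}^\N \in R_i\}$. One checks that $L_i' = ((L_i / (0,\emptyset)^\star) \cdot (0,\emptyset)^\star) \cap \VSTRn$, so $L_i' \in \CFL$ by the closure of $\CFL$ under right-quotient by, right-concatenation with, and intersection with regular languages. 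The saturation property yields $\underline{v} \in L_i' \iff \underline{v}^\N \in R_i$. Hence $L = B(L_1',\ldots,L_k')$, with complements taken inside the regular language $\VSTRn$, lies in $\bc{\CFL}$ and satisfies $L^\N = R$.

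For the second inclusion, the plan is to reduce an arbitrary $\bc{\CFL}$-language defining a numerical predicate to the bounded case already handled by Lemma~\ref{lem:cfl+bounded->fo[+]}. Given $R = L^\N$ with $L \in \bc{\CFL}$ and $L \subseteq \VSTRn$, Lemma~\ref{lem:a_bounded} supplies $w_1,\ldots,w_m$ with $L \subseteq w_1^\star \cdots w_m^\star$. Writing $L$ as a Boolean combination of context-free languages $L_1,\ldots,L_k$, I replace each $L_i$ by $L_i \cap w_1^\star \cdots w_m^\star$. Each such restriction is a bounded CFL, and since $L$ itself lies in $w_1^\star \cdots w_m^\star$, the Boolean combination of the restricted ingredients still equals $L$. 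Lemma~\ref{lem:cfl+bounded->fo[+]} places every bounded CFL in $\FO[\mathord{+}]$, and since this logic is closed under Boolean operations, $L \in \FO[\mathord{+}]$. Lemma~\ref{lem:a->a^n} then lifts this to $R = L^\N \in \FO[\mathord{+}]$.

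The step I expect to need the most care with is the push-in move in the second inclusion: intersecting each ingredient $L_i$ with the common bounding regular expression and arguing that the resulting Boolean combination still equals $L$. This relies on $L$ itself being bounded, which is a global fact from Lemma~\ref{lem:a_bounded} rather than something inherited from the individual $L_i$. The first inclusion reduces to bookkeeping about CFL closure properties once the saturation $L_i'$ is set up correctly, the only subtlety being that complements must be taken within the regular ambient $\VSTRn$ so that they remain in $\bc{\CFL}$.
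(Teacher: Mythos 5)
Your first inclusion is correct and is essentially the paper's own argument: the paper normalizes each witnessing language to a kernel-closed one to handle $\cap$ and $\cup$, and then pads with $(0,\emptyset)^\star$ before complementing inside $\VSTRn$; your single saturation $L_i' = ((L_i/(0,\emptyset)^\star)\cdot(0,\emptyset)^\star)\cap\VSTRn$ packages both normalizations at once, and the closure properties you invoke (right quotient by, concatenation with, and intersection with regular sets) are all valid for $\CFL$.

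The second inclusion has a genuine gap at exactly the step you flagged. The identity $B(L_1\cap W,\ldots,L_k\cap W)=L$ with $W=w_1^\star\cdots w_m^\star$ is false in general once $B$ involves complementation: for any word $x\notin W$ we have $x\notin L_i\cap W$ for every $i$, so $x$ belongs to the left-hand side iff the Boolean function underlying $B$ evaluates to true on the all-false assignment. Concretely, for $L=\complement{L_1}$ with $L\subseteq W$ one gets $\complement{L_1\cap W}=\complement{L_1}\cup\complement{W}=L\cup\complement{W}\supsetneq L$. Boundedness of $L$ only guarantees that the two sides agree \emph{on} $W$, not off it. The repair is to relativize all complements to $W$ (equivalently, to intersect the final combination with $W$), and this forces one further observation you omit: the relativized complement $W\setminus(L_i\cap W)$ is a bounded language that is in general \emph{not} context-free, so Lemma~\ref{lem:cfl+bounded->fo[+]} does not apply to it directly; instead one must check that $W$ itself is $\FO[\mathord{+}]$-definable (it is, via the formula $\varphi_{w_1,\ldots,w_n}$ appearing in that lemma) and then perform the complementation inside the logic. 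This is precisely how the paper proceeds: it relativizes to the bounded regular language $\VSTRn$ and handles each co-context-free ingredient by writing $L_{ij}\cap\VSTRn$ as the complement of $(\complement{L_{ij}}\cap\VSTRn)\cup\complement{\VSTRn}$, where $\complement{L_{ij}}\cap\VSTRn$ is a bounded $\CFL$. With that correction your argument goes through and coincides with the paper's.
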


  \begin{proof}
    The inclusion $\bc{\CFL^\N} \subseteq \bc{\CFL}^\N$ follows from 
      (1.) the fact that for every unary $\calV_n$-language $L \in \CFL$, there exists an equivalent kernel-closed unary $\calV_n$-language $L' \in \CFL$;
      (2.) $L_1^\N \cap L_2^\N = (L_1 \cap L_2)^\N$ and $L_1^\N \cup L_2^\N = (L_1 \cup L_2)^\N$ for kernel-closed languages $L_1, L_2$; and 
      (3.) the observation that for a kernel-closed $L\in \CFL$ with $R=L^\N$, the language $L'=L(0,\emptyset)^\star \in \CFL$ also verifies $R=L'^\N$, so that $\overline{R} = (\overline{L'} \cap \VSTRn)^\N \in \bc{\CFL}^\N$.
    
    It remains to show that $\bc{\CFL}^\N \subseteq \FO[\mathord{+}]$.
    Denote by $\co\CFL$ set of languages whose complement is in $\CFL$, i.\,e., $\co\CFL = \{ L : \overline{L} \in \CFL\}$.
    Let $R \subseteq \bc{\CFL}^\N\cap\N^n$ and $\calV_n=\{x_1,\ldots, x_n\}$. 
    Further, let $L$ be some unary $\calV_n$-language such that $R=L^\N$; w.\,l.\,o.\,g. $L = \bigcup_{i=1}^m \bigcap_{j=1}^{k_i} L_{ij}$ where $L_{ij} \in \CFL \cup \co\CFL$. 
    It holds that $L$ is a unary $\calV_n$-language if and only if $\bigcap_{j=1}^{k_i} L_{ij}$ is a unary $\calV_n$-language for all $1 \leq i \leq m$. Hence, we need to show that $L_i^\N$ is $\FO[\mathord{+}]$-definable, for any unary $\calV_n$-language $L_i=\bigcap_{j=1}^{k_i} L_{ij}$ with $L_{ij} \in \CFL \cup \co\CFL$.
    
    Note that $\VSTRn$ is bounded and definable in $\FO[<] \subset \REG$. Then
    $L_i = (L_{i1} \cap \VSTRn) \cap \cdots \cap (L_{ik} \cap \VSTRn)$ and each $(L_{ij} \cap \VSTRn)$, $1\leq j \leq k$, is bounded (Lemma~\ref{lem:a_bounded}). We have to distinguish the following two cases:
    
    \renewcommand{\descriptionlabel}[1]{\hspace\labelsep\normalfont{#1}}
    \begin{description}
      \item[Case 1: $L_{ij} \in \CFL$.] Then $L_{ij} \cap \VSTRn \in \CFL$. Thus Lemma~\ref{lem:cfl+bounded->fo[+]} implies $L_j \cap \VSTRn \in \FO[\mathord{+}]$. 
      
      \item[Case 2: $L_{ij} \in \co\CFL$.] As $L_{ij} \cap \VSTRn$ is bounded, it can be written as $L_{ij} \cap \VSTRn = \{w_1^{e_1} \cdots w_n^{e_n} : (e_1,\ldots, e_n) \in X\}$ for words $w_1, \ldots, w_n$ and some relation $X \subseteq \N_0^n$.
      Hence, 
      \[
         \overline{L_{ij} \cap \VSTRn} 
         = \overline{L_{ij}} \cup \overline{\VSTRn} 
         = ( \overline{L_{ij}} \cap \VSTRn ) \cup \overline{\VSTRn} 
       \]
      where $\overline{L_{ij}} \cap \VSTRn$ is the intersection of a context-free language with a regular language and therefore context-free. 
      Further note that $\overline{L_{ij}} \cap \VSTRn = \{w_1^{e_1} \cdots w_n^{e_n} \in \VSTRn : (e_1,\ldots, e_n) \notin X\}$ is bounded.
      From Lemma~\ref{lem:cfl+bounded->fo[+]} it now follows that $\overline{L_{ij}} \cap \VSTRn \in \FO[\mathord{+}]$. Thus, finally, $L_{ij} \cap \VSTRn = \overline{(\overline{L_{ij}} \cap \VSTRn) \cup \overline{\VSTRn}} \in \FO[\mathord{+}]$.
    \end{description}
    
    Summarizing, $L_i = (L_{i1} \cap \VSTRn) \cap \cdots \cap (L_{ik} \cap \VSTRn)$ is definable in $\FO[\mathord{+}]$ using the conjunction of the defining formulae. Since $L_i$ is a unary $\calV_n$-language by assumption, Lemma~\ref{lem:a->a^n} implies the claim. 
  \end{proof}
  
  That is, the relations definable in the Boolean closure of the context-free unary
  $V_n$-languages are captured by $\FO[\mathord{+}]$. Hence, $\FO[\bc{\CFL}^\N] \subseteq \FO[\mathord{+}]$. Now Lemma~\ref{lem:add-in-cfl^n} yields the following corollary.
  
  \begin{corollary} \label{cor:bccfl subset fo+}
    $\FO[\DCFL^\N] = \FO[\CFL^\N] = \FO[\bc{\CFL}^\N] = \FO[\mathord{+}].$
  \end{corollary}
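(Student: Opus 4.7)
The plan is to establish the four-way equality by closing the cycle
\[
  \FO[\mathord{+}] \ \subseteq\ \FO[\DCFL^\N] \ \subseteq\ \FO[\CFL^\N] \ \subseteq\ \FO[\bc{\CFL}^\N] \ \subseteq\ \FO[\mathord{+}],
\]
relying entirely on results that have already been proven earlier in this section. The two middle inclusions are free: from $\DCFL \subseteq \CFL \subseteq \bc{\CFL}$ one reads off $\DCFL^\N \subseteq \CFL^\N \subseteq \bc{\CFL}^\N$ directly via Definition~\ref{numerical}, and enriching the signature of the logic only enlarges the set of definable relations.

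For the first inclusion $\FO[\mathord{+}] \subseteq \FO[\DCFL^\N]$, I would invoke Lemma~\ref{lem:add-in-cfl^n}, which exhibits a unary $\calV_3$-language $L_+ \in \DCFL$ with $L_+^\N = \{(x_1,x_2,x_3) : x_1+x_2 = x_3\}$. So the ternary relation ``$x_1 + x_2 = x_3$'' is itself a $\DCFL$-numerical predicate. Given any $\FO[\mathord{+}]$-formula $\varphi$, one first flattens its arithmetic terms by introducing fresh existentially bound variables so that every occurrence of $+$ lives in an atom of the shape $t_1 + t_2 = t_3$, and then replaces each such atom by the atomic $\FO[\DCFL^\N]$-formula $L_+^\N(t_1,t_2,t_3)$. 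This yields an equivalent $\FO[\DCFL^\N]$-formula.

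For the last inclusion $\FO[\bc{\CFL}^\N] \subseteq \FO[\mathord{+}]$, Theorem~\ref{thm:cfl subset fo+} already supplies $\bc{\CFL}^\N \subseteq \FO[\mathord{+}]$, so every predicate $R \in \bc{\CFL}^\N$ comes equipped with an $\FO[\mathord{+}]$-formula $\chi_R$ defining it. An arbitrary formula $\psi \in \FO[\bc{\CFL}^\N]$ mentions only finitely many such atomic predicates $R_1,\ldots,R_k$; substituting each occurrence $R_i(\vec t)$ by $\chi_{R_i}(\vec t)$ produces an $\FO[\mathord{+}]$-formula $\psi'$ equivalent to $\psi$.

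The only point that genuinely requires care is this last substitution step: by the convention in Section~\ref{sect:preliminaries}, an $\FO$-definition of a numerical relation need hold only on all \emph{sufficiently large} initial segments of $\N$, and each $\chi_{R_i}$ therefore comes with its own threshold $m_i$. The main obstacle, then, is to observe that the finitely many thresholds $m_1,\ldots,m_k$ can be replaced by their maximum, so that on any initial segment at least that large every substituted atom faithfully reflects its $\bc{\CFL}^\N$-predicate and the first-order quantifiers in $\psi$ and $\psi'$ range over the same universe. Beyond this mild bookkeeping the corollary is immediate.
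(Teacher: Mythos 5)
Your proposal is correct and follows essentially the same route as the paper: the paper also derives $\FO[\bc{\CFL}^\N]\subseteq\FO[\mathord{+}]$ from Theorem~\ref{thm:cfl subset fo+} and closes the cycle via Lemma~\ref{lem:add-in-cfl^n}, with the middle inclusions being immediate from $\DCFL\subseteq\CFL\subseteq\bc{\CFL}$. Your additional remarks on term flattening and on aligning the ``sufficiently large initial segment'' thresholds are sound bookkeeping that the paper leaves implicit.
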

  
  We note that in particular, for any $k \in \N$, the inclusion $\left(\bigcap_k\CFL\right)^\N \subsetneq \FO[\mathord{+}]$ holds, where $\bigcap_k\CFL$ denotes the languages definable as the intersection of $\leq k$ context-free languages: this is deduced from embedding numerical predicates derived from the infinite hierarchy of context-free languages by Liu and Weiner into $\CFL^\N$ \cite{liwe73}. Hence, 
  \[ 
    \textstyle \CFL^\N \subsetneq \cdots \subsetneq (\bigcap_k\CFL)^\N  \subsetneq (\bigcap_{k+1}\CFL)^\N \subsetneq \cdots \subsetneq (\bigcap\CFL)^\N \subseteq \FO[\mathord{+}].
  \]
  
  Unfortunately, we could neither prove nor refute $\FO[\mathord{+}] \subseteq \bc{\CFL}^\N$. The difficulty in comparing $\FO[\mathord{+}]$ and $\bc{\CFL}^\N$ comes to some extent from the restriction on the syntactic representation of tuples in $\CFL$; viz., context-free languages may only compare distances between variables, whereas the tuples defined by unary $\calV_n$-languages count positions from the beginning of a word. This difference matters only for language classes that are subject to similar restrictions as the context-free languages (e.\,g., the regular languages are not capable of counting, the context-sensitive languages have the ability to convert between these two representations). To account for this special behavior, we will render precisely $\CFL^\N$ in Theorem~\ref{rhm:cfl^n}.
  
  But there is more to be taken into account. Consider, e.\,g., the relation $R = \{ (x,x,x) : x \in \N \}$. $R$ is clearly definable in $\CFL^\N$, yet the set $E(L)$ of the defining language $L$, $L^\N=R$, is not stratified semilinear. Specifically, duplicate variables and permutations of the variables do not increase the complexity of a unary $\calV_n$-language $L$ but affect $L^\N$.
  
  Let $t$ be an order type of $\vec{x}=(x_1,\ldots,x_n)$ and say that a relation $R \subseteq \N^n$ has order type $t$ if, for all $\vec{x} \in R$, $\vec{x}$ has order type $t$.
  For $\vec{x}$ of order type $t$, let $\vec{x}'=(x_1',\ldots,x_m')$, $m \leq n$, denote the variables in $\vec{x}$ with mutually distinct values and let $\pi_t$ denote a permutation such that $x_{\pi_t(i)}' < x_{\pi_t(i+1)}'$, $1 \leq i <m$. 
  We define functions $\mathit{sort}\colon \powerset{\N^n} \to \powerset{\N^m}$ and $\mathit{diff}\colon\powerset{\N^n} \to \powerset{\N_0^n}$ as 
  \begin{align*}
    \mathit{sort}(R) & = \big\{ \pi_t(\vec{x}') : \vec x \in R \text{ has order type $t$} \big\}, \\
    \mathit{diff}(R) & = \Big\{(x_i)_{1\leq i \leq n} : \Big(\sum_{j=1}^i x_j\Big)_{1\leq i \leq n} \in R \Big\}.
  \end{align*}
  The function $\mathit{sort}$ rearranges the components of $R$ in an ascending order and eliminates duplicates, whereas $\mathit{diff}$ transforms a tuple $(x_1,\ldots,x_n)$ with $x_1 < x_2 < \cdots < x_n$ into $(x_1,x_2-x_1,x_3-x_2-x_1, \ldots,x_n - \sum_{i=1}^{n-1} x_i)$, a representation more ``suitable'' to $\CFL$ (cf. $E(L)$ in Theorem~\ref{thm:ginsburg}).
  
  \begin{theorem} \label{rhm:cfl^n}
    Let  $R \subseteq \N^n$. $R\in \CFL^\N$ if and only if there exists a partition $R=R_1 \cup \cdots \cup R_k$ such that each
    $\mathit{diff} \big(\mathit{sort}(R_i)\big)$, $1 \leq i \leq k$, is a stratified semilinear set.
  \end{theorem}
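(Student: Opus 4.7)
The plan is to prove both directions via Ginsburg's Theorem~\ref{thm:ginsburg}, using a decomposition according to order type of the underlying variable tuple; the transformations $\mathit{sort}$ and $\mathit{diff}$ translate between the Ginsburg exponent representation and the numerical relation $R \subseteq \N^n$. Throughout I use that stratified semilinear sets are preserved under coordinate projection and translation, and that adjoining a new coordinate together with one extra period vector whose unique non-zero entry sits at the maximal index preserves stratification (no four-index crossing can be introduced).

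Forward direction: assume $R = L^\N$ for some $L \in \CFL$ with $L \subseteq \VSTRn$. For every ordered partition $t = (V_1, \ldots, V_m)$ of $\calV_n$ into nonempty blocks, set $L_t = L \cap (0,\emptyset)^\star(0,V_1)(0,\emptyset)^\star \cdots (0,V_m)(0,\emptyset)^\star$. Each $L_t$ lies in $\CFL$ (intersection with a regular language), is bounded by construction, and the union $\bigcup_t L_t^\N$ equals $R$, with $L_t^\N$ consisting entirely of tuples whose order type is $t$. Theorem~\ref{thm:ginsburg} gives that $E_t = \{(e_0, \ldots, e_m) : (0,\emptyset)^{e_0}(0,V_1)(0,\emptyset)^{e_1} \cdots (0,V_m)(0,\emptyset)^{e_m} \in L_t\}$ is stratified semilinear. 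A direct computation yields $\mathit{diff}(\mathit{sort}(L_t^\N)) = \{(e_0+1, e_1+1, \ldots, e_{m-1}+1) : (e_0, \ldots, e_m) \in E_t\}$, which is a projection of $E_t$ (dropping the trailing $e_m$) followed by translation, hence stratified semilinear.

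Reverse direction: given $R = R_1 \cup \cdots \cup R_k$ with each $\mathit{diff}(\mathit{sort}(R_i))$ stratified semilinear, first refine the partition so that each $R_i$ consists of tuples of a single order type $t_i = (V_{i,1}, \ldots, V_{i,m_i})$; this is legitimate because only finitely many order types exist on $\calV_n$. With $S_i = \mathit{diff}(\mathit{sort}(R_i)) \subseteq \{1, 2, \ldots\}^{m_i}$, define
$L_i = \{(0,\emptyset)^{d_1 - 1}(0,V_{i,1})(0,\emptyset)^{d_2 - 1} \cdots (0,V_{i,m_i})(0,\emptyset)^j : (d_1, \ldots, d_{m_i}) \in S_i,\ j \geq 0\}$. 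The Ginsburg exponent set of $L_i$ relative to the factorization $(0,\emptyset)^\star (0,V_{i,1}) (0,\emptyset)^\star \cdots (0,V_{i,m_i})(0,\emptyset)^\star$ is obtained from $S_i$ by translating by $-(1,\ldots,1)$, interleaving forced constants $1$ at the $V_{i,j}$-block positions, and appending a free trailing coordinate. The forced $1$'s merely shift the constant vectors of each linear summand, while the trailing coordinate adds a single period whose sole non-zero entry sits at the maximal index. By the preservation principles stated above, the resulting exponent set remains stratified semilinear, so Theorem~\ref{thm:ginsburg} yields $L_i \in \CFL$; taking $L = \bigcup_i L_i$ gives $L^\N = R$ with $L \in \CFL$.

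The most delicate step, which I expect to be the main obstacle, is the verification that the ``no four-index crossing'' condition on the period sets is not destroyed when the $V_{i,j}$-block positions are interleaved into the Ginsburg parameterization and the free $(0,\emptyset)^\star$ tail is appended. This amounts to a careful case analysis of which coordinates each period can occupy after padding, confirming that crossings among padded $\beta$-periods reduce to the stratification of $S_i$, and crossings involving the trailing unit period are ruled out because its non-zero entry is maximal.
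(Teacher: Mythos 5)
Your proof is correct and shares the paper's skeleton---decompose by order type, apply Ginsburg's theorem (Theorem~\ref{thm:ginsburg}) in both directions, and use $\mathit{diff}\circ\mathit{sort}$ to pass between positions and block lengths---but the technical device you use to move between the two encodings is genuinely different. The paper re-encodes a unary $\calV_n$-structure as a word $\letterA_1^{d_1}\cdots\letterA_m^{d_m}\letterE^{j}$ via an explicit finite-state transducer $\psi$ (and a one-sided inverse $\chi$ for the converse), invokes closure of $\CFL$ under rational transductions, and only then applies Ginsburg to the transduced language; the identity $\mathit{diff}\big(\mathit{sort}(L_{t}^\N)\big)=E\big(\psi(A)\big)$ is its key computation. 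You instead apply Ginsburg directly to $L_t$ with bounding words $(0,\emptyset),(0,V_1),(0,\emptyset),\ldots,(0,V_m),(0,\emptyset)$ and push all the bookkeeping into closure properties of stratified semilinear sets: dropping coordinates, translating, interleaving forced constant coordinates, and appending one unit period at the maximal index. These closure facts are all true and your verifications are the right ones (a crossing $i<j<k<l$ involving the appended unit period would force two distinct indices to coincide with the maximal one, and deleting or zero-padding coordinates of the periods cannot create a crossing), so your route trades the paper's transducer constructions for a short case analysis on period sets---arguably more elementary, at the cost of having to state and check those preservation lemmas explicitly. One caveat, which you share with the paper: in the reverse direction, ``refining'' a block $R_i$ that mixes order types is not innocuous, since stratified semilinearity of $\mathit{diff}\big(\mathit{sort}(R_i)\big)$ need not pass to the order-type slices (stratified semilinear sets, like $\CFL$, are not closed under intersection); the theorem is really being read with each $R_i$ of a single order type, which is also what the paper's use of a single re-indexing map $\pi$ presupposes.
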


  \begin{proof}
    For the direction from left to right, let $L^\N \in \CFL^\N$, $L^\N \subseteq \N^n$ and let $t_1, \ldots, t_p$ exhaust the possible order types of $\vec{x}=(x_1,\ldots,x_n)$.
    As $\CFL$ is closed under intersection with regular languages, $L$ can be partitioned into context-free languages $L=L_{t_1} \cup \cdots \cup L_{t_p}$ such that $L_{t_i}^\N$ has order type $t_i$, $1 \leq i \leq p$.
    
    Fix any $L_{t_i}$. 
    By design of $\mathit{sort}$, it holds that $x_1 < x_2 < \cdots < x_m$ for all $(x_1,\ldots,x_m) \in \mathit{sort}(L_{t_i}^\N)$, 
    hence $\mathit{diff} \big(\mathit{sort} (L_{t_i}^\N)\big) \in \N_0^m$ is defined.
    Since $\vec x$ has order type $t_i$ for all $\vec x \in L_{t_i}$, $\mathit{sort}(L_{t_i}^\N)=\left(\varphi(L_{t_i})\right)^\N$ for a homomorphism $\varphi$ substituting the characters $(0,V)$, $V \neq \emptyset$, with appropriate $(0,V')$. We thus obtain that  $\mathit{sort}(L_{t_i}^\N) \in \CFL^\N$.
    Say $A^\N=\mathit{sort}(L_{t_i}^\N)$ for the context-free unary $\calV_n$-language $A$, then Theorem~\ref{thm:ginsburg} implies that the set $E(A)$ is a stratified semilinear set. 
    Consider the finite state transducer $T$ in Figure~\ref{fig:transducer}. 
    \begin{figure}[!hbt]
      \centering
      \includegraphics{ext_uniformity.0}
      \caption{Transducer $T$\label{fig:transducer}}
    \end{figure}
    $T$ defines the rational transduction $\psi\colon \Gamma_n^\star \to \{\letterA_1,\ldots,\letterA_m, \letterE\}^\star$, $\psi(\underline{w}) = v_1 \cdots v_{s}$, where $\underline{w}=\underline{w}_1 \cdots \underline{w}_s$, $|\underline{w}|=s$, is the given unary $\calV_n$-structure and, for $1 \leq i \leq s$,  $1 \leq j \leq m$,
    \[
      v_i = 
      \begin{cases}
        \letterA_j, &\text{if } \underline{w}_{i} \cdots \underline{w}_s= (0,\emptyset)^l(0,\{x_j\})\underline{u} \text{ for some } l \in \N_0, \underline{u} \in \Gamma_n^\star, \\
        \letterE, &\text{if } \underline{w}_1 \cdots \underline{w}_i(0,\emptyset)^l=\underline{w} \text{ for some } l \in \N_0.
      \end{cases}
    \]
    We claim that $\mathit{diff}\big(\mathit{sort}(L_{t_i}^\N)\big)=E\big(\psi(A)\big)$. The claim concludes the direction from left to right, since $\psi(A) \in \CFL$ due to the closure of $\CFL$ under rational transductions.
                
    \begin{claim}
      Let $A \in \CFL$ such that $A^\N=\mathit{sort}(L_{t_i}^\N)$, then $\mathit{diff}\big(\mathit{sort}(L_{t_i}^\N)\big)=E\big(\psi(A)\big)$. 
    \end{claim}
    
    To prove the claim, let $\calV_n=\{x_1,\ldots,x_m\}$ and let $A \in \CFL$ be a unary $\calV_n$-language such that $A^\N=\mathit{sort}(L_{t_i}^\N)$. Fix an arbitrary $\vec{c}=(c_1,\ldots,c_m) \in A^\N$ and choose $\underline{w} \in A$ such that $\underline{w}^\N=\vec{c}$. Then $c_1 < c_2 < \cdots < c_m$ and
    \begin{multline*}
      \underline{w}= 
      (0,\emptyset)^{c_1-1}(0,\{x_1\})(0,\emptyset)^{c_2-c_1-1}(0,\{x_2\}) \cdots \\ 
      (0,\emptyset)^{c_{m}-\sum_{i=1}^{m-1}c_i-1}(0,\{x_m\})(0,\emptyset)^{d},
    \end{multline*}
    where $d=|\underline{w}|-c_m$. Hence $\psi(\underline{w}) = \letterA_1^{c_1} \letterA_2^{c_2-c_1} \cdots \letterA_m^{c_{m}-\sum_{i=1}^{m-1}c_i} \letterE^d$ and 
    \[
            E\big(\psi(\{\underline{w}\})\big)=\Big\{\Big(c_1,c_2-c_1,\ldots, c_{m}-\sum_{i=1}^{m-1}c_i\Big)\Big\}.
    \]
    On the other hand, $\mathit{diff}(\{\vec{c}\})=\{(c_1,c_2-c_1,\ldots, c_{m}-\sum_{i=1}^{m-1}c_i)\}$. Thus, for every $\vec{c} \in \mathit{sort}(L_{t_i}^\N)$, $E\big(\psi(\{\vec{c}\})\big)=\mathit{diff}(\{\vec{c}\})$ and $\mathit{diff}\big(\mathit{sort}(L_{t_i}^\N)\big)=E\big(\psi(A)\big)$. 
    This implies the claim and concludes the direction from left to right.
            
    For the direction from right to left, it suffices to show that $R_i \in \CFL^\N$ for each $1 \leq i \leq k$. 
    By assumption, $\mathit{diff} \big( \mathit{sort}(R_i)\big) \subseteq \N_0^m$ is a stratified semilinear set. Thence there exists a bounded language $A \in \CFL$ such that $E(A)=\mathit{diff} \big( \mathit{sort}(R_i)\big)$. Let $A$ w.\,l.\,o.\,g. be bounded by $\letterA_1,\ldots,\letterA_m \in \Sigma$, i.\,e., $A \subseteq \letterA_1^\star\cdots\letterA_m^\star$. Define the rational transduction $\chi\colon\{\letterA_1,\ldots,\letterA_m\}^\star \to \Gamma_n^\star$ as $\chi(w)=\underline{v}_1\cdots \underline{v}_s$, where $w=w_1\cdots w_s$, $|\underline{w}|=s$, and, for $1 \leq i \leq s$, $1 \leq j \leq m$,
    \[
      v_i = 
      \begin{cases}
              (0,\{x_j\}), &\text{if } i=s \text{ and } w_i = \letterA_j \text{ or } i < s \text{ and } w_i = \letterA_j \neq  w_{i+1}, \\
              (0,\emptyset) &\text{if } i<s \text{ and } w_i = w_{i+1}. \\
      \end{cases}
    \]
    Note that $(\psi\circ\chi)(w)=w$ for all $w \in A$. An argument analogous to the above claim thus yields $E(A)=E\big(\psi(\chi(A))\big) = \mathit{diff}\big((\chi(A))^\N\big)$ and $\mathit{sort}(R_i)=(\chi(A))^\N$.
    In particular, $(\chi(A))^\N \in \CFL^\N$.
    
    Moreover, $c_1 < c_2 < \cdots < c_m$ for all $\vec{c}=(c_1,\ldots, c_m) \in (\chi(A))^\N$, thus there exists a function 
    $\pi\colon\{1,\ldots,n\}\to \{1,\ldots, m\}$, $n \geq m$, such that 
    \[
      \big\{ (x_{\pi(1)},\ldots,x_{\pi(n)}) : (x_1,\ldots, x_m) \in (\chi(A))^\N \big\} = R_i.
    \]
    Let the homomorphism $\phi\colon\Gamma_n^\star \to \Gamma_n^\star$ mimic the above transformation by replacing $(0,\{x_i\})$ with $(0,V_i)$, where $V_i=\{x_j: 1 \leq j \leq n, \pi(j)=i\}$, $1 \leq i \leq m$. Then $R_i = \big((\phi \circ\chi)(A)\big)^\N \in \CFL^\N$.
  \end{proof}

\section{The Uniformity Duality and Context-Free Languages} \label{sect:cfl_property}
    
  Due to the previous section, we may express the Uniformity Duality Property for context-free languages using Corollary~\ref{cor:bccfl subset fo+} in the following more intuitive way: 
  let $\calQ=\{\exists\}$ and $\calL$ be such that 
  $\FO[\calL^\N] = \FO[\mathord{<}\nolinebreak,+]$
  (e.\,g., $\DCFL \subseteq \calL \subseteq \bc{\CFL}$), then the Uniformity Duality Property for $(\{\exists\},\calL)$ is equivalent to
  \begin{equation} \label{eq:prop_w_cfl}
    \FO[\arb] \cap \calL = \FO[\mathord{<}\nolinebreak, +] \cap \calL.
  \end{equation}
  We will hence examine whether \eqref{eq:prop_w_cfl} holds, and see that this is not the case.
  
  For a binary word $u=u_{n-1}u_{n-2}\cdots u_0 \in \{0,1\}^\star$, we write $\integer{u}$ for the integer $u_{n-1} 2^{n-1} + \cdots + 2 u_1 + u_0$. Recall the Immerman language $L_I \subseteq \{0, 1, \letterA\}^\star $, that is, the language consisting of all words of the form
  $$
    x_1 \, \letterA \, x_2 \, \letterA \, \cdots \, \letterA \, x_{2^n},
  $$
  where $x_i \in \{0,1\}^n$, $\integer{x}_i + 1 = \integer{x}_{i+1}$, $1 \leq i < 2^n$, and $x_1=0^n$, $x_{2^n}= 1^n$. For example, 
  $00 \letterA 01 \letterA 10 \letterA 11  \in L_I$ and $000 \letterA 001 \letterA 010 \letterA 011 \letterA 100 \letterA 101 \letterA 110 \letterA 111 \in L_I$. We prove that despite its definition involving arithmetic, $L_I$ is simply the complement of a context-free language.
    
  \begin{lemma} \label{lem:immerman}
    The complement $\complement{L_I}$ of the Immerman language is context-free.
  \end{lemma}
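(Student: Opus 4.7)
\emph{Proof plan.}
The plan is to express $\overline{L_I}$ as a finite union of context-free languages and invoke closure of $\CFL$ under finite union. A string $w\in\{0,1,\letterA\}^\star$ fails to lie in $L_I$ iff at least one of the following holds: $w$ does not match the skeleton $\{0,1\}^+(\letterA\{0,1\}^+)^\star$; two consecutive blocks of $w$ have different lengths; the first block of $w$ contains a $1$; the last block of $w$ contains a $0$; or some pair of consecutive blocks $x_i,x_{i+1}$ of common length $n$ violates $\integer{x_{i+1}}=\integer{x_i}+1$. If none of these holds then every block has a common length $n$, $x_1=0^n$, $x_k=1^n$, and every consecutive increment is $+1$, which forces $k=2^n$ and places $w$ in $L_I$; so the case split is exhaustive.

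The skeleton, first-block, and last-block failures are plainly regular, and the length-mismatch failure is a textbook context-free language: a nondeterministic PDA guesses the witness pair, pushes one symbol per character of $x_i$, then pops one per character of $x_{i+1}$, accepting on imbalance. The successor-failure case is the heart of the argument, and I split it into (a) two consecutive blocks with a coinciding least significant bit, which itself witnesses failure because $+1$ always flips the LSB; and (b) LSBs alternate correctly yet some consecutive pair still violates the successor relation. Subcase (a) is a regular check on the character immediately preceding each $\letterA$.

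For subcase (b), I use the local characterisation $\integer{x_{i+1}}=\integer{x_i}+1$ iff $x_i=u\cdot 0\cdot 1^r$ and $x_{i+1}=u\cdot 1\cdot 0^r$ for some $r\geq 0$ and common prefix $u\in\{0,1\}^{n-r-1}$. The relation therefore fails iff $x_i=1^n$ (regular), or the trailing $1$-count of $x_i$ differs from the trailing $0$-count of $x_{i+1}$ (a PDA pushes one marker per trailing $1$ of $x_i$ and pops one per trailing $0$ of $x_{i+1}$, accepting on imbalance), or the common prefix disagrees in at least one position. The purely copy-like failure $x_i=x_{i+1}$ has already been excluded by (a), and the remaining prefix disagreements are witnessed by a nondeterministic PDA that guesses the disagreement position, records it with a distinguished stack marker placed during the push of $x_i$, and, after the trailing alignment is consumed, checks via the finite control that the bit read at the resurfacing of the marker during the pop of $x_{i+1}$ differs from the marked one.

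The main obstacle is this marker-alignment step, because the stack naturally realises only the crossed pairing $j\leftrightarrow n-j+1$ between the positions of $x_i$ and $x_{i+1}$; the block-length equality (enforced by the stack balance between the push and pop phases) together with the rigid form of the successor relation and the prior exclusion of $x_i=x_{i+1}$ make the single-witness comparison go through. Once the PDAs for all the subcases are assembled and combined with the regular and CFL pieces above, closure of $\CFL$ under finite union yields $\overline{L_I}\in\CFL$.
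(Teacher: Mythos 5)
Your overall decomposition is sound, and your ripple-carry characterisation ($x_{i+1}=x_i+1$ iff $x_i=u\cdot 0\cdot 1^r$ and $x_{i+1}=u\cdot 1\cdot 0^r$) is a legitimate alternative to the paper's criterion, which instead characterises $\integer{u}+1=\integer{v}\pmod{2^n}$ by the absence of finitely many forbidden $4$-bit windows $u_iu_{i-1}v_iv_{i-1}$. The exhaustiveness argument, the regular pieces, the length-mismatch PDA and the trailing-run PDA are all fine. The problem is exactly the step you flag as ``the main obstacle'' and then wave away: the prefix-disagreement PDA. As you describe it, the machine pushes one stack symbol per character of \emph{all} of $x_i$ (with a marker at the guessed position $j$) and pops one per character of $x_{i+1}$, so the marker resurfaces at position $n-j+1$ of $x_{i+1}$, not at position $j$. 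No appeal to ``the rigid form of the successor relation'' repairs this: the crossed comparison relates $x_i[j]$ to a bit of the trailing run or to the wrong part of the other prefix, and it already fires on genuine successor pairs (e.g.\ $x_i=0110$, $x_{i+1}=0111$ disagree under the pairing $1\leftrightarrow 4$), so the piece as described is unsound, not merely incomplete. A single stack cannot simultaneously enforce $|x_i|=|x_{i+1}|$ and align equal indices.

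The standard repair---and the one the paper's language $A$ implicitly relies on---is to give up the length check inside this piece: push only while reading the prefix of $x_i$ strictly before position $j$, record $x_i[j]$ in the finite control, read the remainder of $x_i$ and the separator without touching the stack, then pop while reading $x_{i+1}$ and compare when the stack returns to its starting level. This aligns positions counted from the left. Completeness survives because unequal block lengths are already caught by your separate length-mismatch piece. Soundness survives because consecutive blocks of a word in $L_I$ have equal length and identical prefixes $u$ up to the carry position---but only if you also confine $j$ to the common-prefix region (say, by checking in the finite control that a $0$ still occurs after position $j$ in $x_i$ and a $1$ after position $j$ in $x_{i+1}$); without that confinement the carry position itself ($0$ versus $1$) is a same-index disagreement present in \emph{every} valid successor pair, and the piece would again intersect $L_I$. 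Two smaller points: an all-zero non-initial block is a failure your trailing-run PDA does not obviously detect (the paper adds an explicit regular piece for it), and the paper's forbidden-window criterion lets it skip your two-stage tail-plus-prefix analysis entirely, since a single same-index two-character comparison suffices there.
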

  
  \begin{proof}
    Let $\Sigma=\{0, 1, \letterA\}$. Throughout this proof, $u$ and $v$ stand for binary words.
    \begin{claim}
      Let $u=u_{n-1}u_{n-2}\cdots u_0, v=v_{n-1}v_{n-2} \cdots v_0$ and $u_0\neq v_0$. Then
      \begin{equation}\label{eq:succ}
        \integer{u} +1 =  \integer{v} \pmod{2^n}
      \end{equation}
      iff none of the words $u_1u_0v_1v_0, u_2u_1v_2v_1, \ldots,$ $u_{n-1}u_{n-2}v_{n-1}v_{n-2}$ belongs to
      \begin{equation}\label{eq:succcondition}
        \{ 0010, 0011, 0100, 0111, 1001, 1000, 1110, 1101 \}. 
      \end{equation}
    \end{claim}
    The claim implies that the following language is context-free:
    \[
      A = \{ xu \letterA vy : x\in\Sigma^\star , y\in\Sigma^\star, |u|=|v|, \integer{u} + 1 \neq \integer{v} \pmod{2^{|u|}} \}.
    \]
    Note that $\{ xu \letterA vy : x\in (\Sigma^\star \letterA)^\star,\ y \in (\letterA \Sigma^\star)^\star,\ |u|=|v|,\ \integer{u} + 1 \neq \integer{v}  \pmod{2^{|u|}} \} \subset A \subset \complement{L_I}$. Hence $A$ does not catch all the words in $\complement{L_I}$, but it catches all the words of the correct ``form'' which violate the successor condition (modulo $2^{|u|}$). For example, $A$ catches $00 \letterA 01 \letterA 11 \letterA 11 \letterA$, but it does not catch $\letterA$ nor $0 \letterA 1 \letterA 0 \letterA 1$ nor   $01 \letterA 10 \letterA 11 \letterA 00$ nor $00 \letterA 01 \letterA 10 \letterA 11 \letterA 00 \letterA 01$ nor $00 \letterA 01001 \letterA 10 \letterA 11$.
    We complete the proof by expressing $\complement{L_I}$ as follows:
    \begin{equation} \label{eq:L_Icomplement}
      \renewcommand{\arraystretch}{1.15}
      \begin{array}{rl} 
        \complement{L_I} = 
          & 
          A                                                              \cup 
          \letterA^\star                                                 \cup 
          \Sigma^\star \letterA 0^\star \letterA \Sigma^\star            \cup
          \Sigma^\star \letterA 1^\star \letterA \Sigma^\star \,         \cup  \\ &
          \{0,1\}^\star 1 \Sigma^\star \cup \Sigma^\star 0 \{0,1\}^\star \cup
          \bigcup_{|u|\neq |v|} (\Sigma^\star \letterA)^\star u \letterA v (\letterA \Sigma^\star )^\star.
      \end{array}
    \end{equation}
    
    Now we prove the claim. The direction from left to right is easy: if any word $u_iu_{i-1}v_iv_{i-1}$ belongs to the forbidden set \eqref{eq:succcondition} then \eqref{eq:succ} trivially fails.        For the converse, we must prove that if no forbidden word occurs then   \eqref{eq:succ} holds.  We prove this by induction on   the common length $n$ of the words $u$ and $v$. When $n=1$, $u_0$ and $v_0\neq u_0$ are successors modulo $2$. So suppose that $n\geq 1$ and that none of the forbidden words occurs in the pair $(u_nu,v_nv)$, where $u_n,v_n\in\{0,1\}$, $u=u_{n-1}\cdots u_1u_0$ and $v=v_{n-1}\cdots v_1v_0$ and $u_0\neq v_0$. Then by induction, $\integer{u} + 1 = \integer{v} \pmod{2^n}$. There are four cases to be treated:
    
    \renewcommand{\descriptionlabel}[1]{\hspace\labelsep\normalfont{#1}}
    \begin{description}
      \item[Case 1: $u_{n-1}=v_{n-1}=0$.] Then no overflow into $u_n$ occurs when $1$ is added to $\integer{u}$ to obtain $\integer{v}$.  Since the forbidden words leave only $u_n=v_n$ as possibilities, $\integer{u_n u} + 1 = \integer{v_n v}$. 
      \item[Case 2: $u_{n-1}=v_{n-1}=1$.] Analogous.
      \item[Case 3: $0=u_{n-1}\neq v_{n-1}=1$.] Analogous.
      \item[Case 4: $1=u_{n-1}\neq v_{n-1}=0$.] This is an interesting case. The fact that $\integer{u} + 1 = \integer{v} \pmod{2^n}$ implies
        that $u=1^n$ and $v=0^n$. Now the forbidden words imply $u_n\neq v_n$. This means that either $u_nu=01^n$ and $v_nv=10^n$, or 
        $u_nu=1^{n+1}$ and $v_nv=0^{n+1}$. In the former case, $\integer{u_nu} + 1 = \integer{v_nv}$, and in the latter case, $\integer{u_nu} + 1 = \integer{v_nv} \pmod{2^{n+1}}$.
    \end{description}
  \end{proof}
  
  For a language $L \subseteq \Sigma^\star$, let $\neutral{L}$ denote $L$ supplemented with a neutral letter $\letterE \notin \Sigma$, i.e., $\neutral{L}$ consists of all words in $L$ with possibly arbitrary repeated insertions of the neutral letter.
  
  \begin{theorem} \label{thm:dual_prop_fails_for_bccfl}
    $\FO[\arb]\cap \bc{\CFL} \supsetneq \FO[\mathord{<}\nolinebreak,+] \cap \bc{\CFL}.$
  \end{theorem}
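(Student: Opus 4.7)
The plan is to exhibit the neutral-letter wrapping $\neutral{L_I}$ of the Immerman language as a separating witness, showing $\neutral{L_I}\in\FO[\arb]\cap\bc{\CFL}$ but $\neutral{L_I}\notin\FO[\mathord{<},+]$.

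For membership in $\FO[\arb]$: the Immerman language $L_I$ lies in $\AC{0}=\FO[\arb]$ by the standard construction---given an input of length $m=(n+1)2^n-1$, arbitrary numerical predicates identify the $2^n$ blocks of length $n$ and verify that each carries the binary encoding of its index---and wrapping with a neutral letter preserves $\FO[\arb]$-definability. For membership in $\bc{\CFL}$: Lemma~\ref{lem:immerman} gives $\complement{L_I}\in\CFL$. Let $h\colon\{0,1,\letterA,\letterE\}^\star\to\{0,1,\letterA\}^\star$ be the homomorphism erasing $\letterE$. Then $\neutral{L_I}=h^{-1}(L_I)$, so $\complement{\neutral{L_I}}=h^{-1}(\complement{L_I})\in\CFL$ by closure of $\CFL$ under inverse homomorphisms; hence $\neutral{L_I}\in\co\CFL\subseteq\bc{\CFL}$.

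The main obstacle is proving $\neutral{L_I}\notin\FO[\mathord{<},+]$. I would invoke the Crane Beach Property for $\FO[\mathord{<},+]$ established in~\cite{baimlascth05}: every neutral-letter language definable in $\FO[\mathord{<},+]$ is already definable in $\FO[\mathord{<}]$. Thus it suffices to show $\neutral{L_I}\notin\FO[\mathord{<}]$, and since $\FO[\mathord{<}]$-definable languages over finite words are star-free, hence regular, it is enough to show $\neutral{L_I}\notin\REG$. If $\neutral{L_I}$ were regular, intersecting with the regular language $\{0,1,\letterA\}^\star$ would yield $L_I\in\REG$, which a straightforward pumping argument refutes: pumping within the initial block $0^n$ of a sufficiently long word of $L_I$ breaks the uniform block-length invariant that every $x_i$ has length $n$.

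Combining the three steps places $\neutral{L_I}$ in $(\FO[\arb]\cap\bc{\CFL})\setminus\FO[\mathord{<},+]$, yielding the desired strict inclusion. The leverage offered by Lemma~\ref{lem:immerman} and by the cited Crane Beach Property for $\FO[\mathord{<},+]$ makes this argument short; the technical burden of the proof is essentially contained in Lemma~\ref{lem:immerman}, whose proof we already have.
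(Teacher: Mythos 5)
Your witness and overall strategy match the paper's proof, but there is a genuine gap in the upper bound: the claim that ``wrapping with a neutral letter preserves $\FO[\arb]$-definability'' is false as a general principle. Take $L=\{0^n1^n : n\geq 0\}\in\FO[\mathord{<},+]\subseteq\FO[\arb]$; then $\neutral{L}$ restricted to inputs of the form $uv$ with $u\in\{0,\letterE\}^m$ and $v\in\{1,\letterE\}^m$ asks whether the number of $0$'s in $u$ equals the number of $1$'s in $v$, which is as hard as exact counting and hence not in $\AC{0}=\FO[\arb]$. The underlying reason is that $\neutral{L}=h^{-1}(L)$ for an erasing homomorphism $h$, and $\AC{0}$ is not closed under inverse erasing homomorphisms, since recovering the rank of a position among the non-neutral positions is a counting problem. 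So $\neutral{L_I}\in\FO[\arb]$ does not follow for free from $L_I\in\FO[\arb]$; it is precisely the nontrivial content of the refutation of the Crane Beach Conjecture, and the paper simply cites \cite[Lemma~5.4]{baimlascth05} for the full statement $\neutral{L_I}\in\FO[\arb]\setminus\FO[\mathord{<},+]$. To repair your argument, either cite that lemma for the $\FO[\arb]$ membership as well, or reproduce its specific construction, which exploits the structure of $L_I$ so that the successor check between adjacent blocks can be carried out without exact counting of the interspersed neutral letters.

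The remaining steps are sound and essentially agree with the paper. Your derivation of $\neutral{L_I}\in\co\CFL\subseteq\bc{\CFL}$ via $\complement{\neutral{L_I}}=h^{-1}(\complement{L_I})$ is the paper's chain of equivalences rephrased through closure of $\CFL$ under inverse homomorphisms, resting on Lemma~\ref{lem:immerman} exactly as the paper does. Your route to $\neutral{L_I}\notin\FO[\mathord{<},+]$ --- the Crane Beach Property of $\FO[\mathord{<},+]$ combined with $L_I\notin\REG$ --- is a legitimate substitute for citing the nonmembership half of \cite[Lemma~5.4]{baimlascth05}, and is in fact exactly how the paper argues the analogous fact for $R_I$ in Theorem~\ref{thm:dual_prop_fails_for_bcdcfl}.
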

  
  \begin{proof}
    From the Crane Beach Conjecture by Barrington et~al. 
    \cite[Lemma~5.4]{baimlascth05}, we know that $\neutral{L_I} \in \FO[\arb] \setminus \FO[\mathord{<}\nolinebreak,+]$ . So we are done if we can show $\neutral{L_I} \in \bc{\CFL}$. We proved in Lemma~\ref{lem:immerman} that $\complement{L_I}=\{0, 1, \letterA \}^\star \setminus L_I$ is context-free. Therefore $\neutral{\complement{L_I}}$ is context-free. Now, for any $w_\letterE \in  \{0, 1, \letterA, \letterE\}^\star $, let $w \in \{0, 1, \letterA\}^\star $ be the word obtained by deleting all occurrences of the neutral letter $\letterE$ from $w_\letterE$. Then for any $w_\letterE$,
    $$
      w_\letterE\in \neutral{L_I} \iff w \in L_I
            \iff w \notin \complement{L_I}
            \iff w_\letterE \notin \neutral{\complement{L_I}}.
    $$
    In other words, $\neutral{L_I} = \{0, 1, \letterA, \letterE\}^\star \setminus \neutral{\complement{L_I}} = \complement{\neutral{\complement{L_I}}}$ and thus $\neutral{L_I} \in \bc{\CFL}$.
  \end{proof}

  Theorem~\ref{thm:dual_prop_fails_for_bccfl} implies that the Uniformity Duality Property fails for $\calQ=\{\exists\}$ and $\calL=\bc{\CFL}$, 
  since $\FO[\mathord{<}, \bc{\CFL}^\N] = \FO[\mathord{<}, \mathord{+}]$.
  Yet, it even provides a witness for the failure of the duality property in the case of $\calL = \CFL$, as the context-free language $\neutral{\complement{L_I}}$ lies in $\FO[\arb] \setminus \FO[\mathord{<},+]$. We will state this result as a corollary further below. 
  For now, consider the modified Immerman language $R_I$ defined as $L_I$ except that the successive binary words are reversed in alternance, i.\,e.,
  $$
    R_I = \{ \ldots, 000 \letterA (001)^R \letterA 010 \letterA (011)^R \letterA 100 \letterA (101)^R \letterA 110 \letterA (111)^R,\ldots\}.
  $$
  $R_I$ is the intersection of two deterministic context-free languages. Even more, the argument in Lemma~\ref{lem:immerman} can actually be extended to prove that the complement of $R_I$ is a linear $\CFL$. Hence,
  
  \begin{theorem} \label{thm:dual_prop_fails_for_bcdcfl}
    \begin{enumerate}
    \item $\FO[\arb]\cap \bc{\DCFL} \supsetneq \FO[\mathord{<}\nolinebreak,+] \cap \bc{\DCFL}$.
    \item $\FO[\arb]\cap \bc{\mathrm{Lin}\CFL} \supsetneq \FO[\mathord{<}\nolinebreak,+] \cap \bc{\mathrm{Lin}\CFL}$.
    \end{enumerate}
  \end{theorem}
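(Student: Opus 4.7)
The plan is to exhibit, in both parts, the language $\neutral{R_I}$ as a witness and to carry over the scheme of Theorem~\ref{thm:dual_prop_fails_for_bccfl}. The Crane Beach separation of \cite[Lemma~5.4]{baimlascth05} applies verbatim to $R_I$ since reversing alternate binary blocks does not change the encoded successor relation on $\{0,\ldots,2^n-1\}$, so $\neutral{R_I} \in \FO[\arb]\setminus \FO[\mathord{<},\mathord{+}]$. It therefore suffices to show $\neutral{R_I}\in\bc{\DCFL}$ for part~1 and $\neutral{R_I}\in\bc{\mathrm{Lin}\CFL}$ for part~2, and it is enough to do this for $R_I$ and $\complement{R_I}$ themselves since padding with a neutral letter is an inverse homomorphism and both Boolean classes are closed under Boolean operations and such inverse homomorphism.

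For part~1, I would decompose $R_I = R_{\mathrm{shape}}\cap R_{\mathrm{even}}\cap R_{\mathrm{odd}}$, where $R_{\mathrm{shape}}\in\REG$ enforces the global format $y_1\letterA y_2\letterA\cdots\letterA y_k$ with equal block lengths and the boundary conditions $y_1=0^n$ and $y_k = (1^n)^R$, while $R_{\mathrm{even}}$ and $R_{\mathrm{odd}}$ enforce the successor condition across every other adjacency. At an adjacency of shape $u\letterA v^R$ the bits of $u$ and $v^R$ face each other symmetrically about the $\letterA$, so a DPDA that pushes $u$, pops against $v^R$, and verifies the $O(1)$-window forbidden-pattern condition of the claim inside Lemma~\ref{lem:immerman} in finite control recognises the condition deterministically. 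The other orientation $u^R \letterA v$ is handled analogously by a DPDA that resets at the preceding $\letterA$, pushes the block $u^R$, and pops against $v$. Both DPDAs extend to the whole input by skipping unrelated adjacencies, so $R_{\mathrm{even}},R_{\mathrm{odd}}\in\DCFL$, hence $R_I\in\bc{\DCFL}$.

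For part~2, the plan is to show $\complement{R_I}\in\mathrm{Lin}\CFL$, from which $\neutral{R_I} = \complement{\neutral{\complement{R_I}}} \in \bc{\mathrm{Lin}\CFL}$ follows exactly as in the last step of Theorem~\ref{thm:dual_prop_fails_for_bccfl}. I would adapt the decomposition~\eqref{eq:L_Icomplement}, whose only non-regular summand is the analogue of the set $A$ detecting a local successor violation. In an adjacency $u\letterA v^R$ of $R_I$, the pairs $(u_{i-1},v_{i-1})$ and $(u_i,v_i)$ entering the forbidden four-bit pattern now sit at the common distance $|v|+1$ from their partners; a linear grammar can therefore guess the violating adjacency and the violating position $i$, generate a constant-size core that carries one of the eight forbidden patterns around the $\letterA$, and extend outward by matched (but otherwise arbitrary) bits on the two sides of this core, padding with free $\Sigma^\star\letterA\cdots\letterA\Sigma^\star$ on the very outside. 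The mirrored orientation $u^R\letterA v$ is handled by the transpose grammar. All other summands of $\complement{R_I}$ are regular, hence trivially linear, and a finite union of linear CFLs is linear CFL.

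The main obstacle will be the linearity bookkeeping in part~2: a linear grammar admits only one nonterminal on each right-hand side, so it must commit to a single witnessing adjacency and a single violating bit-pair and expand symmetrically around it, while the bits of all other blocks and the surrounding $\letterA$-separators must be produced by the same single linear chain of rules. Verifying that this outward expansion indeed realises the same $|v|+1$ distance on both sides and simultaneously manufactures the free filler on the outside, without sneaking in a second recursion, is the delicate step; once done, the DPDA constructions of part~1 and the Crane Beach inheritance close both parts routinely.
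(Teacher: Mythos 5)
Your proposal matches the paper's proof essentially step for step: the witness is $\neutral{R_I}$, the separation $\neutral{R_I}\in\FO[\arb]\setminus\FO[\mathord{<},\mathord{+}]$ comes from the Crane Beach results of Barrington et al.\ (the paper argues it slightly differently, via the Crane Beach property of $\FO[\mathord{<},\mathord{+}]$ together with $R_I\notin\REG$), part~1 rests on $R_I$ being an intersection of two deterministic PDA-checkable languages handling alternate adjacencies, and part~2 on $\complement{R_I}$ being linear context-free via the decomposition of Lemma~\ref{lem:immerman} with the violation-detector $A'$ made linear by the symmetry of aligned bits about each $\letterA$ --- indeed you supply constructions the paper merely asserts. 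One slip to correct: in an adjacency $u\letterA v^R$ the partners $u_i$ and $v_i$ do \emph{not} sit at a fixed distance $|v|+1$ from each other (that is the geometry of $L_I$, which is precisely why $A$ there is not linear); they sit at equal but $i$-dependent distances from the $\letterA$, i.e., the matching is nested --- which is exactly what your symmetric outward expansion implements, so the construction itself (and the theorem) stands.
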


  \begin{proof}
    For the first claim, observe that $\neutral{R_I}\notin
    \FO[\mathord{<},\mathord{+}]$, because $\FO[\mathord{<},\mathord{+}]$ has the Crane
    Beach Property and $R_I\notin\REG$. On the other hand,
    $\neutral{R_I}\in\FO[\arb]$; and since $R_I\in  \bc{\DCFL}$,
    $\neutral{R_I}\in \bc{\DCFL}$.
    
    For the second claim, $\complement{R_I}$ can be expressed analogously to $\complement{L_I}$ by substituting $A$ with an appropriate set $A'$ in \eqref{eq:L_Icomplement}. Further, for $u\letterA v \in \Sigma^\star$ with binary words $u$ and $v$, the conditions $\integer{u^R} + 1 \neq \integer{v} \pmod{2^{|u|}}$ and $\integer{u} + 1 \neq \integer{v^R} \pmod{2^{|u|}}$ can be checked by a linear $\CFL$. Since the linear context-free languages are closed under finite union, the claim follows.
  \end{proof}

  The role of neutral letters in the above theorems suggests taking a
  closer look at $\neutral{\CFL}$. As the Uniformity Duality Property
  for $(\{\exists\},\neutral{\CFL})$ would have it,
  all neutral-letter context-free languages in $\AC{0}$ would be
  regular and aperiodic. This is, however, not the case as witnessed by
  $\neutral{\complement{L_I}}$. Hence,
  
  \begin{corollary}
    In the case of $\calQ=\{\exists\}$, the Uniformity Duality
    Property fails in all of the following cases.
    \begin{enumerate}
      \item $\calL=\CFL$,
      \item $\calL=\bc{\CFL}$,
      \item $\calL=\bc{\DCFL}$,
      \item $\calL=\bc{\mathrm{Lin}\CFL}$,
      \item $\calL=\neutral{\CFL}$.
    \end{enumerate}
  \end{corollary}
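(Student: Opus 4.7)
The plan is to reduce all five cases to the task of exhibiting, for each $\calL$, a witness language $L\in(\FO[\arb]\cap\calL)\setminus\FO[\mathord{<},\mathord{+}]$. This reduction is legitimate because each listed $\calL$ sits inside $\bc{\CFL}$: trivially for $\CFL$, by monotonicity of the Boolean closure for $\bc{\DCFL}$ and $\bc{\mathrm{Lin}\CFL}$, and because neutral-letter context-free languages are still context-free for $\neutral{\CFL}$. Hence $\calL^\N\subseteq\bc{\CFL}^\N$ and, by Corollary~\ref{cor:bccfl subset fo+}, $\FO[\mathord{<},\calL^\N]\subseteq\FO[\mathord{<},\mathord{+}]$. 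Consequently any $L\in\calL$ lying outside $\FO[\mathord{<},\mathord{+}]$ but inside $\FO[\arb]$ separates $\FO[\arb]\cap\calL$ from $\FO[\mathord{<},\calL^\N]\cap\calL$, refuting the Uniformity Duality Property for $(\{\exists\},\calL)$.

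Cases 2--4 are then direct invocations of work already carried out in this section: the case $\calL=\bc{\CFL}$ is Theorem~\ref{thm:dual_prop_fails_for_bccfl} (with witness $\neutral{L_I}$), and the cases $\calL=\bc{\DCFL}$ and $\calL=\bc{\mathrm{Lin}\CFL}$ are the two parts of Theorem~\ref{thm:dual_prop_fails_for_bcdcfl} (with witness $\neutral{R_I}$). No new argument is needed; one only rephrases those theorems in terms of the Uniformity Duality Property using the reduction of the preceding paragraph.

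The only cases requiring an explicit witness are Case~1 ($\calL=\CFL$) and Case~5 ($\calL=\neutral{\CFL}$). For both, I propose the single language $\neutral{\complement{L_I}}$. Its membership in $\calL$ is immediate: by Lemma~\ref{lem:immerman}, $\complement{L_I}$ is context-free, so $\neutral{\complement{L_I}}$ is a context-free language equipped with a neutral letter, and therefore lies in both $\CFL$ and $\neutral{\CFL}$. Its membership in $\FO[\arb]$ follows from the refutation of the Crane Beach Conjecture, which places $\neutral{L_I}\in\FO[\arb]=\AC{0}$, together with closure of $\AC{0}$ under complement. To see that $\neutral{\complement{L_I}}\notin\FO[\mathord{<},\mathord{+}]$, invoke the Crane Beach Property of $\FO[\mathord{<},\mathord{+}]$: a neutral-letter language in $\FO[\mathord{<},\mathord{+}]$ has its $\letterE$-free kernel definable in $\FO[\mathord{<}]$, hence regular. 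A standard pumping argument applied to words of the form $0^n\letterA 0^{n-1}1\letterA\cdots\in L_I$ shows $L_I$ is not regular, and neither is $\complement{L_I}$, giving the desired non-membership.

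I do not expect a substantial obstacle: the corollary is essentially a bookkeeping statement collecting Theorems~\ref{thm:dual_prop_fails_for_bccfl} and~\ref{thm:dual_prop_fails_for_bcdcfl} together with the comment already made in the text about $\calL=\CFL$, and extending it to $\calL=\neutral{\CFL}$. The only subtleties to mention explicitly are (i) invoking the Crane Beach Property of $\FO[\mathord{<},\mathord{+}]$ (standard, and already implicit in the paper's discussion of classes contained in $\NEUTRAL\cap\FO[\mathord{+}]$), and (ii) verifying non-regularity of $L_I$ by pumping, which is routine.
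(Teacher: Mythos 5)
Your proposal is correct and follows essentially the same route as the paper: cases 2--4 are Theorems~\ref{thm:dual_prop_fails_for_bccfl} and~\ref{thm:dual_prop_fails_for_bcdcfl} combined with $\FO[\mathord{<},\calL^\N]\subseteq\FO[\mathord{<},\bc{\CFL}^\N]=\FO[\mathord{<},\mathord{+}]$, and cases 1 and 5 use the witness $\neutral{\complement{L_I}}$, exactly as in the text preceding the corollary. The only cosmetic deviations are that the paper handles $\neutral{\CFL}$ via $\neutral{\CFL}^\N\subseteq\FO[\mathord{<}]$ (Proposition~\ref{prop:regular-instances}) rather than through $\bc{\CFL}^\N$, and derives $\neutral{\complement{L_I}}\notin\FO[\mathord{<},\mathord{+}]$ from closure under complement rather than from the Crane Beach Property plus pumping; both variants are sound.
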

  
  \begin{remark}
    The class $\VPL$ of visibly pushdown languages \cite{alma04} has gained
    prominence recently because it shares with $\REG$ many useful
    properties. 
    But despite having access to a stack, the $\VPL$-numerical
    predicates coincide with $\REG^\N$, for each word may only contain 
    constantly many characters different from $(0,\emptyset)$. 
    It follows that the Uniformity Duality Property fails for $\VPL$ and 
    first-order quantifiers: consider, e.\,g., 
    $L=\{\letterA^n\letterB^n : n > 0\} \in \FO[\arb] \cap (\VPL\setminus\REG)$
    then $L \in \FO[\arb] \cap \VPL$ but $L \notin \FO[\mathord{<},\VPL^\N] 
    \cap \VPL$.
  \end{remark}

\section{The Duality in Higher Classes} \label{sect:csl_property}

  We have seen that the context-free languages do not exhibit our
  conjectured Uniformity Duality. In this section we will show that
  the Uniformity Duality Property holds if the extensional uniformity
  condition imposed by intersecting with $\calL$ is quite loose,
  in other words, if the language class $\calL$ is powerful.
  
  Recall the notion of non-uniformity introduced by Karp and Lipton \cite{kali82}.
  
  \begin{definition}
    For a complexity class $\calL$, denote by $\calL\withPolyAdvice$ the class $\calL$ with polynomial advice. That is, $\calL\withPolyAdvice$ is the class of all languages $L$ such that, for each $L$, there is a function $f\colon \N \to \{0,1\}^\star$ with 
    \begin{enumerate}
      \item $|f(x)| \leq p(|x|)$, for all $x$, and
      \item $L^f = \{ \langle x, f(|x|) \rangle : x \in L \} \in \calL$,
    \end{enumerate}
    where $p$ is a polynomial depending on $\calL$. Without loss of generality, we will assume $|f(x)| = |x|^k$ for some $k \in \N$.
  \end{definition}
  Note that, using the above notation, $\DLOGTIME$-uniform $\AC{0}\withPolyAdvice = \AC{0}$.
  As we further need to make the advice strings accessible in a logic, we define the following predicates.
  
  Following~\cite{baimst90}, we say that a Lindstr\"om quantifier $Q_L$ is \emph{groupoidal}, if $L \in \CFL$.
  
  \begin{definition} \label{def:advice_predicate}
    Let $\calQ$ be any set of groupoidal quantifiers. Further, let $L \in \DLOGTIME$-uniform $\AC{0}[\calQ]\withPolyAdvice$ and let $f$ be the function for which $L^f \in \DLOGTIME$-uniform $\AC{0}[\calQ]$. Let $r=2kl+1$, where $k$ and $l$ are chosen such that the circuit family recognizing $L$ in $\DLOGTIME$-uniform $\AC{0}$ has size $n^l$ and $|f(x)|=|x|^k$. We define $\textsc{Advice}_{L,\calQ}^f \in \FO\mathord{+}\calQ[\arb]$ to be the ternary relation
    $$
      \textsc{Advice}_{L,\calQ}^f = \{ (i,n,n^r) : \text{bit $i$ of $f(n)$ equals $1$} \}, 
    $$
    and denote the set of all relations $\textsc{Advice}_{L,\calQ}^f$, for $L \in \calL$, by $\textsc{Advice}_{\calL,\calQ}$.
  \end{definition}
  
  The intention of $\textsc{Advice}_{L,\calQ}^f$ is to encode the advice string as a numerical relation. A point in this definition that will become clear later is the third argument of the $\textsc{Advice}_{L,\calQ}^f$-predicate; it will pad words in the corresponding unary $\calV_n$-language to the length of the advice string. This padding will be required for Theorem~\ref{thm:ufo_holds_for_dspace(n)}.
  
  \begin{theorem} \label{thm:udp-holds-if-advice-is-def}
    Let $\calL$ be a language class and $\calQ$ be a set of groupoidal quantifiers. Then the Uniformity Duality Property for $(\{\exists\}\cup\calQ, \calL)$ holds if $\bit \in \calL^\N$ and $\textsc{Advice}_{\calL,\calQ}\in \calL^\N$. 
  \end{theorem}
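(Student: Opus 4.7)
The inclusion $(\{\exists\}\cup\calQ)[\mathord{<},\calL^\N]\cap\calL \subseteq (\{\exists\}\cup\calQ)[\arb]\cap\calL$ is immediate from $\calL^\N\subseteq\arb$, so the task is to prove the converse. Fix $L\in(\{\exists\}\cup\calQ)[\arb]\cap\calL$. By \cite{baimst90}, $(\{\exists\}\cup\calQ)[\arb]$ equals the non-uniform class $\AC{0}[\calQ]$, so $L$ is decided by an $\AC{0}[\calQ]$ circuit family of some polynomial size $n^l$. The plan is to absorb the non-uniformity into polynomial advice: encode the $n$-th circuit as an advice string $f(n)$ of length $n^k$, and let $L^f=\{\langle x,f(|x|)\rangle:x\in L\}$. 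Then $L^f$ lies in $\DLOGTIME$-uniform $\AC{0}[\calQ]$, which equals $(\{\exists\}\cup\calQ)[\mathord{<},+,\times]$; let $\psi$ be a defining formula for $L^f$.

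From $\psi$ one builds a $(\{\exists\}\cup\calQ)[\mathord{<},\calL^\N]$-formula $\varphi$ for $L$. Two ingredients are used. First, since $\bit\in\calL^\N$ and $\FO[\mathord{<},\bit]=\FO[\mathord{<},+,\times]$, every arithmetic term and $\bit$-lookup appearing in $\psi$ is expressible in $\FO[\mathord{<},\calL^\N]$. Second, wherever $\psi$ reads a symbol of the padded string $\langle x,f(|x|)\rangle$, the formula $\varphi$ performs a case split: positions lying in the input block are read directly via the predicates $P_\sigma$ of $x$, while a position at offset $i$ in the advice block is handled by asking whether $\textsc{Advice}_{L,\calQ}^f(i,|x|,|x|^r)$ holds, which is legitimate because $\textsc{Advice}_{L,\calQ}^f\in\calL^\N$ by hypothesis. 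The $(\{\exists\}\cup\calQ)$-quantifier structure of $\psi$ is preserved verbatim.

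The principal technical obstacle is that the universe of $\varphi$ has size only $n=|x|$, while the positions referred to by $\psi$ range up to $n+n^k$ and the padding coordinate $|x|^r$ of $\textsc{Advice}_{L,\calQ}^f$ is larger still. The remedy, standard in descriptive complexity, is to represent such large addresses as constant-length tuples of variables in $\{1,\ldots,n\}$; the arithmetic inherited from $\bit\in\calL^\N$ is exactly what is needed to code, decode and compare these tuples inside $\FO[\mathord{<},\calL^\N]$. The exponent $r=2kl+1$ in Definition~\ref{def:advice_predicate} is chosen precisely so that the third coordinate $n^r$ of $\textsc{Advice}_{L,\calQ}^f$ dominates both the advice length $n^k$ and the circuit size $n^l$, which is what makes the tuple-encoded advice queries fit into a single $\calL^\N$-predicate rather than requiring iterated queries.

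Once this encoding is in place, the translated $\varphi$ defines $L$ and lies in $(\{\exists\}\cup\calQ)[\mathord{<},\calL^\N]$; together with the standing assumption $L\in\calL$, this yields $L\in(\{\exists\}\cup\calQ)[\mathord{<},\calL^\N]\cap\calL$, as required. I expect the verification that the syntactic translation of $\psi$ correctly simulates gate evaluation over tuple-encoded addresses to be the most delicate part of the write-up, but it is a routine adaptation of the classical simulation of $\DLOGTIME$-uniform $\AC{0}$ by $\FO[\mathord{<},+,\times]$, with $\bit$-queries replaced by $\textsc{Advice}$-queries in the appropriate address range.
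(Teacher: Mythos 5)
Your proposal is correct and follows essentially the same route as the paper: both pass through the Barrington--Immerman--Straubing equality $\FO\mathord{+}\calQ[\arb]=\DLOGTIME$-uniform $\AC{0}[\calQ]\withPolyAdvice$, use $\bit\in\calL^\N$ to recover the arithmetic needed for the tuple-encoded addresses and the uniformity predicates, and substitute the advice-bit lookups by the $\textsc{Advice}_{L,\calQ}^f$ predicate (the paper phrases the translation as the canonical circuit-to-formula construction with gate predicates $\textsc{Pred}$, $\textsc{Input0}$, $\textsc{Input1}$, etc., rather than starting from a formula $\psi$ for $L^f$, but this is the same construction). The only cosmetic divergence is your reading of the third coordinate $n^r$: in the paper it is there chiefly to pad the unary $\calV_n$-structures for the later Theorem~\ref{thm:ufo_holds_for_dspace(n)}, not to enable the encoding in the present proof.
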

  
  \begin{proof}
    Let $\calL$ be a language class that satisfies the requirements of the claim. We have to show that $\FO\mathord{+}\calQ[\arb]\cap \calL = \FO\mathord{+}\calQ[\mathord{<}\nolinebreak,\calL^\N] \cap \calL$.
    
    The inclusion from right to left is trivial. For the other direction, let $L \in \FO\mathord{+}\calQ[\arb] \cap \calL$. Without loss of generality, we assume $L \subseteq \{0,1\}^\star$.
    In \cite{baimst90}, Barrington et~al. state that $\FO\mathord{+}\calQ[\arb] = \FO\mathord{+}\calQ[\bit]\withPolyAdvice = \DLOGTIME$-uniform $\AC{0}[\calQ]\withPolyAdvice$ for arbitrary sets $\calQ$ of monoidal quantifiers, but what is needed in fact is only the existence of a neutral element, which is given in our case.
    There hence exists a polynomial $p(n)=n^k$, $k \in \N$, a function $f$ with $|f(x)| = p(|x|)$, and a $\DLOGTIME$-uniform circuit family $\{C_m\}_{m>0}$ that recognizes $L^f$. 
    From $\{C_m\}_{m>0}$, we construct a formula $\varphi \in \FO\mathord{+}\calQ[\mathord{<}\nolinebreak,\calL^\N]$, essentially replacing the advice input gates with the relation $\textsc{Advice}_{L,\calQ}^f$. 
    
    Let $x_1, \ldots, x_n, y_1, \ldots, y_{n^k}$ denote the input gates of circuit $C_{m}$, $m>0$, where $x_1\cdots x_n=x$ and $y_1\cdots y_{n^k}=f(|x|)$. 
    First canonically transform $\{C_m\}_{m>0}$ into a formula $\varphi'$ over an extended vocabulary $\sigma$ that satisfies $L(\varphi') = L$ (cf.  \cite[Theorem 9.1]{baimst90}, \cite[Theorem IX.2.1]{str94} or \cite[Theorem 4.73]{vol99} for the construction of $\varphi'$). 
    Let $l \in \N$ be such that $n^l$ is a size bound on $\{C_m\}_{m>0}$. Then
    the transformation encodes gates as $l$-tuples of variables over $\{1, \ldots, n\}$ and ensures the correct structure using additional predicate symbols $\textsc{Pred}$, $\textsc{Input0}$, $\textsc{Input1}$, $\textsc{Output}$, $\textsc{And}$, $\textsc{Or}$, $\textsc{Not}$ and predicates $\textsc{Quant}_Q$ for the subset of oracle gates from $\calQ$ used in $C_m$.
    For example, the relation $\textsc{Pred}$ holds on a tuple $(z_1,\ldots, z_{2l})$ iff the gate encoded by $(z_1,\ldots, z_l)$ is a predecessor of $(z_{l+1},\ldots, z_{2l})$; and each of remaining predicates holds on a tuple $(z_1,\ldots, z_l)$ iff $(z_1,\ldots, z_l)$ encodes a gate of the corresponding type. Note that each relation in $\sigma$ is definable in $\FO[\mathord{<},\calL^\N]$, as $\DLOGTIME \subseteq \FO[\bit] \subseteq \FO[\mathord{<},\calL^\N]$.
    
    Next, replace the relations $\textsc{Input0}$ and $\textsc{Input1}$ corresponding to the input gates $y_i$, $1 \leq i \leq n^k$, with the respective advice predicates from Definition~\ref{def:advice_predicate},
    $$
      \neg \textsc{Advice}_{L,\calQ}^f(i,n,n^r)
    \quad\text{and}\quad
      \textsc{Advice}_{L,\calQ}^f(i,n,n^r);
    $$ 
    both of which are definable in $\calL^\N$ by assumption. In the resulting formula, replace the remaining predicates besides $<$ and $\bit$ by their defining $\FO\mathord{+}\calQ[\mathord{<},\calL^\N]$-formulae and eventually obtain a formula defining $L \in \FO\mathord{+}\calQ[\mathord{<},\calL^\N]$.
  \end{proof}

  We can now give a lower bound beyond which the Uniformity Duality Property holds. Let $\NTIME{n}^\calL$ denote the class of languages decidable in linear time by nondeterministic Turing machines with oracles from $\calL$.

  \begin{theorem} \label{thm:ufo_holds_for_dspace(n)}
    Let $\calQ$ be any set of groupoidal quantifiers and suppose $\calL = \NTIME{n}^\calL$. Then the Uniformity Duality Property for $(\{\exists\} \cup \calQ,\calL$) holds.
  \end{theorem}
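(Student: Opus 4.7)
The plan is to apply Theorem~\ref{thm:udp-holds-if-advice-is-def}, which reduces the Uniformity Duality Property for $(\{\exists\}\cup\calQ,\calL)$ to showing $\bit\in\calL^\N$ and $\textsc{Advice}_{\calL,\calQ}\subseteq\calL^\N$. The first condition is essentially free: since $\NTIME{n}\subseteq\NTIME{n}^\calL=\calL$ (take the empty oracle), the linear-time recognizer that scans a unary $\calV_2$-structure, reads off the two marker positions as integers $x_1,x_2$ and accepts iff bit $x_2$ of the binary expansion of $x_1$ equals $1$, shows $\bit\in\calL^\N$.

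The real work is in producing, for each $L\in\calL\cap(\{\exists\}\cup\calQ)[\arb]=\calL\cap\AC{0}[\calQ]$, an advice function $f$ whose associated advice predicate lies in $\calL^\N$. I would first extract from the fixpoint hypothesis two closure properties of $\calL$. \emph{Closure under complement:} for $L\in\calL$, a deterministic linear-time $L$-oracle machine that flips its oracle's answer shows $\bar L\in\mathrm{DTIME}(n)^\calL\subseteq\NTIME{n}^\calL=\calL$. \emph{Absorption of the linear-time hierarchy over any $L\in\calL$:} by induction on $k$, $\Sigma_{k+1}^{\mathrm{lin},L}=\NTIME{n}^{\Sigma_k^{\mathrm{lin},L}}\subseteq\NTIME{n}^\calL=\calL$, and the $\Pi$-levels follow from closure under complement. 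Now fix $L$ with circuits of size $n^l$ and associated advice of size $n^k$, let $r=2kl+1$ as in Definition~\ref{def:advice_predicate}, and define $f(n)$ to be the lexicographically smallest description, padded to $n^k$ bits, of a size-$n^l$ circuit correct for $L$ on $\{0,1\}^n$. The unary $\calV_3$-language defining $\textsc{Advice}_{L,\calQ}^f$ lives on structures of length $\geq n^r$, giving its recognizer a comfortable time budget $\bigO{n^r}$ to emit a single padded $\calL$-oracle query of length $\bigO{n^r}$ asking whether
\[
  \exists C\,\Bigl[\bigl(\forall x\colon C(x)=L(x)\bigr)\,\wedge\,\bigl(\forall C'<C\ \exists x\colon C'(x)\neq L(x)\bigr)\,\wedge\,\mathrm{bit}_i(C)=1\Bigr].
\]
Since $n^l\leq n^r$, all quantifier ranges and the matrix (which evaluates a size-$n^l$ circuit and makes a single $L$-query) fit linearly in the padded input length $n^r$, so this is a $\Sigma_3^{\mathrm{lin},L}$-predicate and hence in $\calL$. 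Therefore the advice recognizer itself lies in $\mathrm{DTIME}(n)^\calL\subseteq\calL$, placing $\textsc{Advice}_{L,\calQ}^f\in\calL^\N$.

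The main obstacle is ensuring that the ``canonical circuit'' predicate really lands in $\Sigma_3^{\mathrm{lin},L}\subseteq\calL$: one must carefully pad the oracle query so that the polynomial-time circuit evaluations in the matrix, together with the quantifiers over circuits and inputs, all remain linear in the \emph{padded} input length, so that the equation $\NTIME{n}^\calL=\calL$ can be applied uniformly at every level of the induction. Once this bookkeeping is carried out, the canonical advice lies in $\calL^\N$ and Theorem~\ref{thm:udp-holds-if-advice-is-def} concludes the argument.
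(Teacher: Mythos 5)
Your proposal is correct and follows essentially the same route as the paper: reduce to Theorem~\ref{thm:udp-holds-if-advice-is-def}, canonicalize the advice as the lexicographically least one consistent with $L$, and place the padded advice predicate in a bounded-alternation linear-time class relative to $\calL$, which collapses to $\calL$ by the fixpoint hypothesis. The paper packages the collapse as the single computation $\NTIME{n}^{\co\NTIME{n}^{\calL}}=\NTIME{n}^{\calL}=\calL$ (after noting closure under complement) rather than as an explicit induction over the linear-time hierarchy, but this is only a cosmetic difference.
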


  \begin{proof}
    Choose any $L \in \AC{0}[\calQ] \cap \calL$ and let $f$ be an advice function for which 
    $L^f \in \DLOGTIME$-uniform $\AC{0}[\calQ]$. We have to show that the relation
    $\textsc{Advice}_{L,\calQ}^f$ is definable in $\calL^\N$. Let $k \in \N$ 
    such that $|f(x)|=|x|^k$ and denote by $\{C_m\}_{m>0}$ the
    $\DLOGTIME$-uniform circuit family of size $\leq n^l$ that recognizes $L^f$ using oracle gates from $\calQ$. 
    Let $N$ be a nondeterministic linear-time Turing machine deciding $L$ using some oracle $L' \in \calL$.  
    
    We will define a nondeterministic Turing machine $M$ that decides
    $x \in L_\text{Adv}$, where $L_\text{Adv}$ is such that
    $L_\text{Adv}^\N = \textsc{ADVICE}_{L,\calQ}^f$. Given input $x$, $M$
    proceeds as follows: 
    \begin{lstlisting}[gobble=6, tabsize=1]
      if $x$ is not of the form $x=[x_1=i,x_2=n,x_3=n^r]$ $\text{for}$ some $n>0 \text{ and } 1 \leq i \leq n$ 
      then reject;
      for all strings $a$ of length $n^k$ in lexicographic ordering do
        $t$ := true;
        for all inputs $y$ of length $n$ do
          if the output of $C_{n+n^k}$ on $y$ with advice $a$ contradicts the result of $N$ on $y$ (*\label{lst:circuit}*)
          then $t$ := false; 
        if $t$ = true and bit $i$ of $a$ is $1$ then accept;
        else reject;
    \end{lstlisting}
    That is, $M$ guesses the advice string $a$ using a na\"ive trial-and-error approach; once the correct advice string $a$ has been found, it accepts $x$ iff the $i$th bit in the advice string $a$ is on. Thus $M$ decides $L_\text{Adv}$.
    
    As for the time required by $M$, note that in line~\ref{lst:circuit} the circuit $C_{n+n^k}$ can be evaluated in time $\bigO{(n+n^k)^{2l}}=\bigO{n^r} = \bigO{|x|}$. 
    Furthermore, $\calL = \NTIME{n}^\calL$ implies that $\calL$ is closed under complement. Thus the above algorithm solves the problem in $\NTIME{|x|}^{\co\NTIME{|x|}^\calL} = \NTIME{|x|}^{\calL}= \calL$ and $\textsc{ADVICE}_{L,\calQ}^f \in \calL^\N$.
    The claim now follows from Theorem~\ref{thm:udp-holds-if-advice-is-def}, because $\calL = \NTIME{n}^\calL$ moreover implies $\bit \in \calL^\N$.
  \end{proof}
  
  \begin{corollary} \label{cor:ufo_holds_for_csl}
    Let $\calQ$ be any set of groupoidal quantifiers.
    The Uniformity Duality Property holds for $(\{\exists\} \cup \calQ,\calL)$ if $\calL$ equals the deterministic context-sensitive languages $\DSPACE{n}$, the context-sensitive languages $\CSL$, the rudimentary languages (i.\,e., the linear time hierarchy \cite{wra78}), $\PH$, $\PSPACE$, or the recursively enumerable languages.
  \end{corollary}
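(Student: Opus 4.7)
The plan is to reduce Corollary~\ref{cor:ufo_holds_for_csl} directly to Theorem~\ref{thm:ufo_holds_for_dspace(n)}: for each class $\calL$ on the list, verify the fixed-point equation $\calL = \NTIME{n}^{\calL}$. The inclusion $\calL \subseteq \NTIME{n}^{\calL}$ is trivial, since any $L \in \calL$ is decided by a linear-time machine making a single oracle call on its own input. The substantive direction is $\NTIME{n}^{\calL} \subseteq \calL$, which I would settle class by class.

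For $\calL = \DSPACE{n}$, I would observe that a nondeterministic linear-time machine writes each oracle query in linear space, so every query is resolved within the space bound; the outer nondeterminism is absorbed by cycling deterministically through the $2^{O(n)}$ computation paths, all in $\DSPACE{n}$. For $\calL = \CSL$ the same space-bounded simulation works, and the complement closure invoked in the proof of Theorem~\ref{thm:ufo_holds_for_dspace(n)} is supplied by Immerman--Szelepcs\'enyi. For the rudimentary languages I would use the inductive characterization $\LH = \bigcup_{k\ge 0} \Sigma_k^{\mathrm{lin}}$ with $\Sigma_{k+1}^{\mathrm{lin}} = \NTIME{n}^{\Sigma_k^{\mathrm{lin}}}$: any oracle in $\LH$ sits at some finite level $k$, pushing the whole computation to level $k+1 \subseteq \LH$. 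For $\PH$ and $\PSPACE$, the classical identities $\NTIME{n}^{\PH} \subseteq \NP^{\PH} = \PH$ and $\NTIME{n}^{\PSPACE} \subseteq \PSPACE^{\PSPACE} = \PSPACE$ close the argument.

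The recursively enumerable case is the delicate one, since RE is not closed under complement while the proof of Theorem~\ref{thm:ufo_holds_for_dspace(n)} literally uses $\co\NTIME{n}^{\calL} = \NTIME{n}^{\calL}$. I would handle this by adopting the existential oracle semantics natural for RE, dovetailing the nondeterministic computation with enumerators witnessing the positive oracle answers, which yields $\NTIME{n}^{\mathrm{RE}} = \mathrm{RE}$ directly without invoking complement closure. The only obstacle anywhere in the proof is thus bookkeeping: fixing the oracle-access convention uniformly across the list so that $\calL = \NTIME{n}^{\calL}$ holds literally in each case; once that is pinned down, Theorem~\ref{thm:ufo_holds_for_dspace(n)} yields the Uniformity Duality Property for all six classes at once.
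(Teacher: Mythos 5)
Your reduction to Theorem~\ref{thm:ufo_holds_for_dspace(n)} is exactly the paper's route: the paper's entire proof is the one-line assertion that each listed class satisfies $\calL = \NTIME{n}^{\calL}$, and your class-by-class verifications for $\DSPACE{n}$, for $\CSL$ (via Immerman--Szelepcs\'enyi), for the linear time hierarchy, and for $\PH$ and $\PSPACE$ are correct and supply detail the paper omits.

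The recursively enumerable case, however, is a genuine gap, and your proposed repair does not close it. You are right that under the standard oracle convention $\NTIME{n}^{\mathrm{RE}} \supseteq \co\mathrm{RE} \not\subseteq \mathrm{RE}$, so the hypothesis of Theorem~\ref{thm:ufo_holds_for_dspace(n)} simply fails for $\mathrm{RE}$. But switching to positive-query (``existential'') oracle semantics only makes the equation $\calL=\NTIME{n}^{\calL}$ hold nominally; it does not restore the proof of the theorem, which genuinely needs negative information about $\calL$. Concretely, the trial-and-error machine $M$ in that proof must detect, for every $y$ of length $n$, whether the circuit's output \emph{contradicts} the result of $N$ on $y$, and ruling out a candidate advice string on which the circuit accepts requires certifying $y\notin L$ --- i.e., it requires $\co\NTIME{n}^{\calL}\subseteq\calL$, which is precisely the complement closure that $\mathrm{RE}$ lacks and that your weakened semantics gives up rather than supplies. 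So under either convention the reduction to Theorem~\ref{thm:ufo_holds_for_dspace(n)} does not go through for $\mathrm{RE}$; this is a defect of the paper's one-line proof as much as of your write-up, and that item would need a different argument (for instance, replacing $\mathrm{RE}$ by the recursive languages, for which $\calL=\NTIME{n}^{\calL}$ does hold, or exhibiting a recursively enumerable valid advice predicate directly, which the brute-force search cannot do since verifying an advice string against $L$ demands deciding non-membership). Describing what remains as ``bookkeeping'' understates the problem.
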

  
 \begin{proof}
   All of the above classes satisfy $\calL=\NTIME{n}^\calL$.
 \end{proof}

\section{Conclusion} \label{sect:conclusion}
  
  For a set $\calQ$ of quantifiers and a class $\calL$ of languages,
  we have suggested that $\calQ[\arb]\cap \calL$ defines
  an (extensionally) uniform complexity class.
  After defining the notion of $\calL$-numerical predicates,
  we have proposed comparing $\calQ[\arb]\cap \calL$ with its subclass
  $\calQ[\mathord{<},\calL^\N]\cap \calL$, a class equivalently defined as
  the (intensionally) uniform circuit class $\FO[\mathord{<},\calL^\N]$-uniform
  $\AC{0}[\calQ] \cap \calL$.

  We have noted that the duality property, defined to hold when both
  classes above are equal, encompasses Straubing's conjecture
  \eqref{conj:reg} as well as some positive and some negative instances
  of the Crane Beach Conjecture.
  
  We have then investigated the duality property in specific cases with
  $\calQ = \{\exists\}$. We have seen that the property fails for
  several classes $\calL$ involving the context-free languages.
  Exhibiting these failures has required new
  insights, such as characterizations of the context-free numerical
  predicates and a proof that the complement of the Immerman
  language is context-free, but these failures have prevented
  successfully tackling complexity classes such as $\AC{0} \cap \CFL$.
  Restricting the class of allowed relations on the left hand side of the uniformity duality property from $\arb$ to a subclass might lead to further insight and provide positive examples of this modified duality property (and address, e.g., the class of context-free languages in different uniform versions of $\AC{0}$). Methods from embedded finite model theory should find applications here. 
  
  More generally, the duality property widens our perspective on the
  relationship between uniform circuits and descriptive complexity
  beyond the level of $\NC{1}$.
  We have noted for example that the property holds for any set of groupoidal quantifiers $\calQ \supseteq \{\exists\}$ and complexity
  classes $\calL$ that are closed under nondeterministic linear-time Turing reductions.
  
  A point often made is that a satisfactory uniformity definition should
  apply comparable resource bounds to a circuit family and to its
  constructor.
  For instance, although $\P$-uniform $\NC{1}$ has merit \cite{all89}, 
  the classes $\AC{0}$-uniform $\NC{1}$ and $\NC{1}$-uniform
  $\NC{1}$ \cite{baimst90} seem more fundamental, provided that
  one can make sense of the apparent circularity.
  As a by-product of our work, we might suggest $\FO[\mathord{<},\calL^\N] \cap
  \calL$ as 
  the minimal ``uniform subclass of $\calL$'' and thus
  as a meaningful (albeit restrictive) 
  definition of \emph{$\calL$-uniform $\calL$}.
  Our choice of $\FO[\mathord{<}]$ as the ``bottom class of interest'' is implicit
  in this definition and results in the containment of
  \emph{$\calL$-uniform $\calL$} in (non-uniform) $\AC{0}$ for
  any $\calL$.
  Progressively less uniform subclasses of $\calL$ would be the classes
  $\calQ[\mathord{<},\calL^\N] \cap \calL$  for $\calQ \supseteq \{\exists\}$.

  Restating hard questions such as conjecture \eqref{conj:reg} in terms of a unifying property does not make these questions go away. But the duality property raises further questions. As an example, can the duality property for various $(\calQ,\calL)$ be shown to hold or to fail when $\calQ$ includes the majority quantifier? This could help develop incisive results concerning the class $\TC{0}$.
  To be more precise, let us consider $\calQ=\{\exists,\MAJ\}$. The majority quantifier is a particular groupoidal (or, \emph{context-free}) quantifier \cite{lamcscvo01}, hence it seems natural to consider the Uniformity Duality Property for $(\{\exists,\MAJ\},\CFL)$:
  \begin{equation} \label{conj:TC0}
    \FO\mathord{+}\MAJ[\arb]\cap{\CFL}=\FO\mathord{+}\MAJ[\mathord{<},+]\cap{\CFL}.
  \end{equation}
  It is not 
  hard to see that the Immerman language in fact is in $\FO\mathord{+}\MAJ[\mathord{<},+]$, hence our Theorem~\ref{thm:dual_prop_fails_for_bccfl} that refutes \eqref{eq:prop_w_cfl}, the Uniformity Duality Property for $(\FO,\bc{\CFL})$, does not speak to whether \eqref{conj:TC0} holds. (Another prominent example that refutes \eqref{eq:prop_w_cfl} is the ``Wotschke language'' $\text{W}=\{(\letterA^n\letterB)^n : n\geq0 \}$, again a co-context-free language \cite{wotschke73}. Similar to the case of the Immerman language we observe that $\text{W}\in\FO\mathord{+}\MAJ[\mathord{<},+]$, hence $\text{W}$ does not refute \eqref{conj:TC0} either.)

  Observe that $\FO\mathord{+}\MAJ[\arb]=\TC{0}$ \cite{baimst90} and that, on the
  other hand, \linebreak $\FO\mathord{+}\MAJ[\mathord{<},+] =\MAJ[\mathord{<}]=\FO[\mathord{+}]\text{-uniform }
  \text{linear fan-in }\TC{0}$ \cite{la04,bela06}. Let us call this 
  latter class $\sTC{0}$ (for \emph{small} $\TC{0}$ or
  \emph{strict} $\TC{0}$). It is known that
  $\sTC{0}\subsetneq\TC{0}$ \cite{lamcscvo01}. Hence we conclude
  that if \eqref{conj:TC0} holds, 
  then in fact $\TC{0}\cap{\CFL}=\sTC{0}\cap{\CFL}$. 
  Thus, if we can show that some language in the Boolean 
  closure of the context-free languages is  not in $\sTC{0}$, 
  we have a new $\TC{0}$ lower bound. Thus, to separate $\TC{0}$ 
  from a superclass it suffices to separate $\sTC{0}$ from a 
  superclass, a possibly less demanding goal. 
  This may be another reason to look for
  appropriate uniform classes $\calL$ such that
  $$\FO\mathord{+}\MAJ[\arb]\cap\calL=\FO\mathord{+}\MAJ[\mathord{<},+]\cap\calL.$$
  
\section*{Acknowledgements}
  We would like to thank Klaus-J\"{o}rn Lange (personal
  communication) for suggesting Lemma~\ref{lem:immerman}. We
  also acknowledge helpful discussions on various topics of this
  paper with Christoph Behle, Andreas Krebs, Klaus-J\"orn Lange
  and Thomas Schwentick.
  We also acknowledge helpful comments from the anonymous referees.

\bibliographystyle{plain}
\bibliography{thi-hannover}

\end{document}